\documentclass[preprint,12pt]{article}
\usepackage[top=1.1in, bottom=1.2in, left=0.8in, right=0.8in]{geometry}
\usepackage{amsmath,amsfonts,amssymb,amsthm,mathtools}
\usepackage{mathrsfs}
\usepackage{graphicx}
\usepackage{xcolor}
\usepackage{colortbl,dcolumn}
\usepackage{psfrag}
\usepackage{booktabs,multirow,array,longtable}
\usepackage{cite}
\usepackage{url}
\usepackage{hyperref}
\hypersetup{
  colorlinks=false,
  pdfborder={0 0 0.7},
  linkbordercolor={1 0 0},
  citebordercolor={0 0.75 0},
  urlbordercolor={0 0 1}
}
\usepackage{enumitem}
\usepackage{caption}
\usepackage{subcaption}
\usepackage{float}
\usepackage[ruled,vlined,linesnumbered]{algorithm2e}
\usepackage{microtype}
\allowdisplaybreaks[4]
\numberwithin{equation}{section}

\newcommand{\R}{\mathbb{R}}

\newcommand{\E}{\mathbb{E}}

\newcommand{\dd}{\mathrm{d}}

\newcommand{\norm}[1]{\left\lVert #1\right\rVert}

\newcommand{\ip}[2]{\left\langle #1,#2\right\rangle}

\newtheorem{theorem}{Theorem}[section]
\newtheorem{definition}[theorem]{Definition}
\newtheorem{lemma}[theorem]{Lemma}
\newtheorem{proposition}[theorem]{Proposition}
\newtheorem{corollary}[theorem]{Corollary}
\newtheorem{remark}[theorem]{Remark}

\newtheorem{assumption}{Assumption}

\setlist[itemize]{leftmargin=2em}
\setlist[enumerate]{leftmargin=2em}
\DontPrintSemicolon
\SetKwInput{KwInput}{Input}
\SetKwInput{KwOutput}{Output}
\SetKw{KwReturn}{return}
\SetAlgoNlRelativeSize{-1}

\begin{document}
\title{Deterministic Denominator Design for Localized Tamed Stochastic Gradient Langevin Dynamics}

\author{
Yiwei Zhou\thanks{School of Mathematics and Statistics, Yunnan University, Kunming, Yunnan 650500, China. Email: \texttt{yiwei.zhou@utexas.edu}.}
\and
Ziheng Chen\thanks{School of Mathematics and Statistics, Yunnan University, Kunming, Yunnan 650500, China. Email: \texttt{12024113103@stu.ynu.edu.cn}.}
}

\date{}
\maketitle

\begin{abstract}
If the denominator in a tamed stochastic gradient Langevin update uses the current stochastic-gradient draw, the conditional mean can be biased even when the stochastic-gradient oracle is unbiased.  A state-dependent denominator fixed before that draw removes this coupling.  We build practical deterministic denominators from a short pilot run.  A log-scale proxy is fitted to the growth score \(G_\star(x)=\|b(x)\|/(1+\|x\|)\), and empirical pilot quantiles set the activation thresholds of a local proxy-quantile envelope.  The reported proxy-quantile experiments use this local denominator.  Separately, we describe a final denominator with a norm-polynomial tail-floor correction that can be used when one wants to certify the global effective-linearity input required by the companion deterministic-envelope Lyapunov theory.  We show how proxy and threshold errors enter denominator errors and the resulting stationary observable errors.  In the reported experiments, the local proxy-quantile denominator improves over random-denominator tamed SGLD at comparable production cost.  It also gives observable behavior close to the $G_\star$-envelope benchmark, without the full-gradient growth-score evaluations required by that benchmark.

\textbf{Key Words: }{\rm\small stochastic gradient Langevin dynamics; tamed Langevin algorithms; deterministic denominator design; proxy-quantile envelopes; nonconvex sampling.}

\textbf{MSC 2020: }{\rm\small Primary 60J22; Secondary 65C05, 60H35, 68W20.}
\end{abstract}

\section{Introduction}
Stochastic gradient Langevin dynamics (SGLD) replaces the full gradient in a Langevin update by a stochastic-gradient oracle.  For nonconvex objectives \(b(x)\) with superlinear growth, fixed-step explicit updates may be unstable unless the drift is tamed.  A denominator can stabilize large increments, but it also changes the Markov transition.  The question is therefore not only whether to tame, but how to choose the denominator.

The companion localized-taming paper~\cite{companion} gives the first design rule.  If the denominator is evaluated from the current stochastic-gradient draw, write its envelope term as \(\widetilde A_\eta(x,U)\).  Then generally
\begin{equation*}
  \E_U\left[
  \frac{\widehat g_m(x,U)}{1+\eta^\alpha \widetilde A_\eta(x,U)}
  \mid x\right]
  \neq
  \frac{\nabla F_z(x)}{1+\eta^\alpha A_\eta(x)},
\end{equation*}
even when \(\widehat g_m\) is unbiased.  This mean-shift channel is removed when the denominator is fixed once the current state is known:
\begin{equation*}
  X_{k+1}=X_k-
  \eta\,\frac{\widehat g_m(X_k,U_k)}{1+\eta^\alpha A_\eta(X_k)}
  +\sqrt{2\eta/\beta}\,Z_{k+1}.
\end{equation*}
This paper starts from that deterministic-denominator principle and uses the localized envelope from the companion framework:
\begin{equation}\label{eq:localized-envelope}
  A_\eta^{c_s,c_h}(x)
  \coloneqq
  c_s\big(\bar A(x)-R\big)_+^\theta
  +c_h\big(\bar A(x)-S_\eta\big)_+ .
\end{equation}
The first step to solve the design problem is to choose the deterministic growth indicator \(\bar A\).  The effective-linearity condition points to the growth score
\[
  G_\star(x)=\frac{\|b(x)\|}{1+\|x\|}.
\]
This score gives the scale on which the localized denominator should switch on and grow. Therefore, we take \(\bar A=G_\star.\) Since \(G_\star\) is expensive to evaluate during the production chain, we fit a proxy score for \(G_\star\) on a shifted log scale, choose \(R\) and \(S_\eta\) from pilot quantiles, and insert the proxy score into \eqref{eq:localized-envelope} to obtain the proxy-quantile envelope.

The proxy-quantile envelope fitted from a pilot run mainly describes the region visited by that run.  The practical denominator used in the reported experiments is therefore the local proxy-quantile denominator obtained from this envelope.  The stable trajectories and bounded diagnostics in the experiments show that this local denominator is already sufficient on the regions visited by the main chains.  This empirical stability, however, is not the same as a global Lyapunov guarantee: outside the calibrated region, the fitted proxy need not control all far-tail states.

For this reason we distinguish the implemented local denominator from the final denominator with a norm-polynomial tail-floor correction.  The tail-floor correction supplies the global effective-linearity input required by the companion deterministic-envelope Lyapunov theory without requiring full-gradient evaluations during production.  Thus the proxy-quantile construction explains the practical local design, while the tail-corrected version provides a formal global stability input when needed.

In implementation, the parameters \((q_R,q_S,\theta,\alpha,\eta)\) are treated as prescribed design parameters in the basic construction.  In the experiments, the fixed default choice already gives a competitive denominator.  Once the proxy score is fixed, a small calibration search over these parameters can further improve the denominator response in some cases.

\paragraph{Relation to existing tamed and stochastic gradient Langevin schemes.}
SGLD and stochastic gradient MCMC use stochastic-gradient oracles to construct scalable Langevin-type samplers \cite{WellingTeh2011,VollmerZygalakisTeh2016,ChenFoxGuestrin2014,MaChenFox2015}.  Existing analyses study discretization bias, stochastic-gradient noise, invariant-measure approximation, variance reduction, preconditioning, and nonasymptotic sampling error \cite{Raginsky2017,ZouXuGu2021,BrosseDurmusMoulines2018,Dubey2016,LiChenCarlsonCarin2016,Dalalyan2017,DurmusMoulines2017}.  For superlinear drifts, explicit Langevin or Euler-type schemes can lose stability, and tamed or adaptive modifications control large increments by changing the drift magnitude or the effective step response \cite{MattinglyStuartHigham2002,HutzenthalerJentzenKloeden2012,Brosse2019,LytrasMertikopoulos2024,LovasLytrasRasonyiSabanis2023}.  Related stabilization devices, such as gradient clipping, also illustrate that modifying stochastic-gradient increments can introduce a bias channel \cite{Koloskova2023}.  These works motivate denominator responses, but they do not address the proxy-calibration problem studied here: how to replace an exact growth score by a low-cost fitted score while preserving the level-set, threshold, and localized-envelope information needed by a deterministic denominator.

\paragraph{Contributions.}
This paper introduces a proxy-calibrated deterministic denominator construction.  Starting from the exact growth score
\[
  G_\star(x)=\frac{\|b(x)\|}{1+\|x\|},
\]
we fit a shifted-log proxy score from pilot samples and use pilot quantiles to calibrate the localized envelope.  This gives a low-cost production-stage denominator that retains the activation geometry of the exact-growth reference without evaluating \(G_\star\) at every step.

The main theoretical contribution is a comparison-and-transfer analysis for this construction.  We show how log-scale proxy accuracy yields level-set comparison, quantile stability, envelope-error control, modified-drift residual bounds, and stationary observable-error transfer.  The perturbation and concentration steps are based on standard Markov-chain and Poisson-equation tools \cite{RobertsRosenthalSchwartz1998,GlynnMeyn1996,Mitrophanov2005,RudolfSchweizer2018,GlynnOrmoneit2002,Paulin2015}, but the residuals are organized around proxy-induced denominator errors.  A norm-polynomial tail floor is added only to restore global growth control outside the pilot-calibrated region.

The experiments confirm that the fitted proxy tracks the \(G_\star\)-envelope closely, improves finite-chain behavior relative to cruder denominator responses, and keeps production-stage cost close to random-denominator baselines while avoiding full-gradient growth-score evaluations.

\section{Setup and notation}
\label{sec:setup}
We use the notation and deterministic-envelope convention of the companion localized-taming framework.

\subsection{Oracle convention and deterministic envelopes}
Let \(F_z:\R^d\to\R\) be an empirical potential or risk.  We use the drift convention
\begin{equation*}
  b(x)=b_z(x)=-\nabla F_z(x).
\end{equation*}
Throughout the paper, we work under the polynomial-growth dissipative Langevin setting used in the companion paper.  The drift is polynomially locally Lipschitz: for some constants \(L>0\) and \(r\ge0\),
\[
  \norm{b(x)-b(y)}
  \le
  L(1+\norm{x}^r+\norm{y}^r)\norm{x-y},
  \qquad x,y\in\R^d .
\]
It is also dissipative: for some constants \(m>0\) and \(b_0\ge0\),
\[
  \ip{x}{b(x)}
  \le
  b_0-m\norm{x}^{r+2},
  \qquad x\in\R^d .
\]

A stochastic-gradient oracle is written as
\begin{equation*}
  \widehat g_m(x,U)=\nabla F_z(x)+\zeta_m(x,U),\qquad
  \E[\zeta_m(x,U)\mid x]=0.
\end{equation*}
Equivalently, the stochastic drift oracle is \(-\widehat g_m(x,U)=b(x)-\zeta_m(x,U)\).

A deterministic envelope \(A_\eta\) is a state-dependent denominator term fixed before the current stochastic-gradient sample is drawn.  With such an envelope, the fixed-step chain used throughout the paper is
\begin{equation*}
  X_{k+1}=X_k-\eta\frac{\widehat g_m(X_k,U_k)}{1+\eta^\alpha A_\eta(X_k)}+\sqrt{2\eta/\beta}\,Z_{k+1}
  =X_k+\eta\frac{b(X_k)-\zeta_m(X_k,U_k)}{1+\eta^\alpha A_\eta(X_k)}+\sqrt{2\eta/\beta}\,Z_{k+1}.
\end{equation*}
In the schemes considered here, the drift is divided by \(1+\eta^\alpha A_\eta(x)\), so we define the retention factor
\begin{equation*}
  \Phi_{A_\eta}(x)=\frac{1}{1+\eta^\alpha A_\eta(x)}.
\end{equation*}
Thus \(\Phi_{A_\eta}(x)\) is the fraction of the original drift retained after taming.  Later we write \(\Phi_A\) when comparing generic fixed envelopes.

Because \(A_\eta(x)\) is fixed conditional on \(x\), the denominator preserves the conditional mean identity:
\begin{equation*}
  \E_U\left[\frac{\widehat g_m(x,U)}{1+\eta^\alpha A_\eta(x)}\mid x\right]
  =
  \frac{\nabla F_z(x)}{1+\eta^\alpha A_\eta(x)}.
\end{equation*}

We use the localized deterministic-envelope family from the companion paper~\cite{companion}.
\begin{definition}[Localized envelope family]
\label{def:localized-envelope-family}
Let \(\bar A:\R^d\to[0,\infty)\) be a deterministic growth indicator.  For parameters \(R\ge0\), \(0<\theta\le1\), an upper threshold \(S_\eta\ge R\), and activation indicators \(c_s,c_h\in\{0,1\}\), define
\begin{equation*}
  A_\eta^{c_s,c_h}(x)
  \coloneqq
  c_s\big(\bar A(x)-R\big)_+^\theta
  +c_h\big(\bar A(x)-S_\eta\big)_+ .
\end{equation*}
The local soft, local hard, and two-component localized envelopes are
\[
  A_\eta^{\rm soft} \coloneqq A_\eta^{1,0},\qquad
  A_\eta^{\rm hard} \coloneqq A_\eta^{0,1},\qquad
  A_\eta^{\rm two} \coloneqq A_\eta^{1,1}.
\]
\end{definition}

The companion stability result uses the following compatibility condition on the deterministic growth indicator:
\begin{equation}\label{eq:companion-growth-indicator-input}
  \frac{\norm{b(x)}}{1+\bar A(x)}
  \le C_{\bar A}(1+\norm{x}),
  \qquad
  \bar A(x)\le C_{\bar A}(1+\norm{x}^{r_{\bar A}}),
  \qquad x\in\R^d .
\end{equation}
With a hard-tail component \(c_h=1\) and a valid upper threshold, for example \(S_\eta\le C_S\eta^{-\alpha}\), this condition is the input used by Lemma~3.5 and Proposition~4.6 of the companion paper.

\subsection{The growth score and the deterministic growth indicator}
The localized family requires a deterministic growth indicator \(\bar A\).  As the ideal deterministic growth indicator, we choose
\begin{equation*}
  \bar A(x)=G_\star(x)
  \coloneqq \frac{\norm{b(x)}}{1+\norm{x}}
  =\frac{\norm{\nabla F_z(x)}}{1+\norm{x}}.
\end{equation*}
The reason for this choice is the effective-linearity goal.  A denominator should tame the drift only when \(\|b(x)\|\) is larger than the linear scale \(1+\|x\|\).  The growth score \(G_\star\) measures exactly this relative size.

This choice satisfies the growth-indicator compatibility in \eqref{eq:companion-growth-indicator-input}.  Indeed,
\begin{equation*}
  \frac{\norm{b(x)}}{1+G_\star(x)}
  =
  \frac{\norm{b(x)}(1+\norm{x})}
       {1+\norm{x}+\norm{b(x)}}
  \le 1+\norm{x}.
\end{equation*}
Under the standing polynomial-growth drift assumption, \(G_\star\) also has the corresponding polynomial upper growth.  Thus \(G_\star\) is a compatible deterministic growth indicator for the localized family when it is used as \(\bar A\).

Using the growth score \(G_\star\) as \(\bar A\) gives the \(G_\star\)-localized envelope.  Its level sets show where the denominator should switch on: larger values of \(G_\star\) mark states where the drift exceeds the linear-growth scale more strongly.  Its threshold excesses,
\begin{equation*}
  (G_\star(x)-R)_+,
  \qquad
  (G_\star(x)-S_\eta)_+,
\end{equation*}
determine the taming strength after activation in the soft and upper components of \eqref{eq:localized-envelope}.  The practical construction replaces this growth score by a proxy score \(\widehat G\).  The proxy is designed to match the level sets and threshold excesses of \(G_\star\).

We summarize the main notation used in the sequel in Table~\ref{tab:notation}.

\begin{table}[H]
\centering
\caption{Main notation.}
\label{tab:notation}
\small
\begin{tabular}{p{0.32\textwidth}p{0.60\textwidth}}
\toprule
Symbol & Meaning \\
\midrule
$F_z$ & empirical potential / risk \\
$b(x)=-\nabla F_z(x)$ & Langevin drift \\
$\widehat g_m(x,U)$ & stochastic-gradient oracle \\
$A_\eta^{c_s,c_h}$ & localized envelope family \\
$\Phi_A(x)$ & retention factor induced by an envelope \\
$G_\star(x)=\norm{b(x)}/(1+\norm{x})$ & growth score \\
$L_\star(x)=\log(G_\star(x)+\tau)$ & shifted log-scale target \\
$\widehat G(x)$ & fitted proxy score \\
$\phi(x)$ & proxy feature map \\
$\widehat R_{q_R},\widehat S_{q_S}$ & empirical pilot quantile thresholds \\
$\theta$ & soft-transition exponent \\
$A_{\rm loc}$ & proxy-quantile localized envelope \\
$D_{\eta,{\rm loc}}$ & local proxy denominator used in the main experiments \\
$D_{\eta,{\rm final}}$ & tail-corrected denominator used only when the norm-polynomial tail floor is added\\
\bottomrule
\end{tabular}
\end{table}

\section{Proxy-quantile deterministic denominator construction}
\label{sec:proxy-quantile-design}
\label{sec:proxy-geometry-quantile}

In Section~\ref{sec:setup}, we choose the deterministic growth indicator \(\bar A\) to be the growth score \(G_\star\).  This section replaces it by a fitted proxy \(\widehat G\), selects empirical quantile thresholds, and forms the local proxy-quantile envelope used by the implemented denominator.  A separate tail-floor correction is introduced later to supply a global stability input.  The proxy should preserve two quantities used by the localized denominator: level sets and threshold excesses.  We use a shifted positive-cone comparison to control both.

\subsection{Positive-cone order for nonnegative scores}

Both \(G_\star\) and \(\widehat G\) are nonnegative scores.  We compare them after adding a small shift \(\tau>0\).  For nonnegative scores \(f,g\), write
\begin{equation*}
  f\preceq_{\delta,\tau}g
  \quad\Longleftrightarrow\quad
  f(x)+\tau\le e^\delta(g(x)+\tau)
  \quad\text{on the region under consideration}.
\end{equation*}
If both inequalities hold, write \(f\asymp_{\delta,\tau}g\).  This is a multiplicative comparison after the positive shift.

\begin{lemma}[Cone comparability and log-ratio error]
\label{lem:cone-log-equivalence}
For nonnegative \(f,g\) and \(\tau>0\),
\begin{equation*}
  f\asymp_{\delta,\tau}g
\end{equation*}
if and only if
\begin{equation}
  \left|\log(f(x)+\tau)-\log(g(x)+\tau)\right|\le \delta
  \quad\text{on the region under consideration}.
  \label{eq:log-ratio-equivalence}
\end{equation}
Equivalently,
\begin{equation*}
  e^{-\delta}
  \le
  \frac{f(x)+\tau}{g(x)+\tau}
  \le
  e^\delta .
\end{equation*}
\end{lemma}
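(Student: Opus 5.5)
The plan is to unfold the definition of \(\asymp_{\delta,\tau}\) pointwise and use that \(\log\) is strictly increasing on \((0,\infty)\). Fix a point \(x\) in the region under consideration. Since \(f,g\ge 0\) and \(\tau>0\), both \(f(x)+\tau\) and \(g(x)+\tau\) are strictly positive. This is the only point where the shift matters: it makes the logarithms and the ratio well defined even where a score vanishes. By definition, \(f\asymp_{\delta,\tau}g\) means that both one-sided orders hold at every such \(x\), namely
\begin{equation*}
  f(x)+\tau\le e^{\delta}\bigl(g(x)+\tau\bigr),
  \qquad
  g(x)+\tau\le e^{\delta}\bigl(f(x)+\tau\bigr).
\end{equation*}

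First I will obtain the ratio form. I divide the first inequality by \(g(x)+\tau>0\) and the second by \(f(x)+\tau>0\), then invert the second. Both operations are legitimate because all quantities are positive. This gives \(e^{-\delta}\le (f(x)+\tau)/(g(x)+\tau)\le e^{\delta}\), and each step reverses.

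Next I take logarithms of the ratio bounds. Because \(\log\) is strictly increasing on \((0,\infty)\) with inverse \(\exp\), the ratio bounds are equivalent to \(-\delta\le \log(f(x)+\tau)-\log(g(x)+\tau)\le \delta\). That is exactly \eqref{eq:log-ratio-equivalence}. Since every step is a pointwise equivalence, the conclusion holds uniformly on the region.

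There is no substantive obstacle in this argument. The one point to state explicitly is positivity of the shifted scores. It justifies dividing, inverting and taking logarithms, and it is the reason the comparison uses \(\tau>0\) rather than the raw scores \(f\) and \(g\).
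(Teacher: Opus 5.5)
Your proof is correct and follows essentially the same route as the paper: write out the two cone inequalities, divide by the positive shifted scores to get the ratio bounds, and take logarithms, with exponentiation giving the converse. Your explicit remark that $\tau>0$ keeps both shifted scores positive is what the paper uses, in the same way, to justify dividing and taking logarithms.
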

\begin{proof}
The two cone inequalities are
\begin{equation*}
  f(x)+\tau\le e^\delta(g(x)+\tau),
  \qquad
  g(x)+\tau\le e^\delta(f(x)+\tau).
\end{equation*}
Since both shifted scores are positive, division gives the two-sided ratio bound, and taking logarithms gives \eqref{eq:log-ratio-equivalence}.  The converse follows by exponentiation.
\end{proof}

\subsection{Sandwich bounds for level sets and excesses}
The \(\tau\)-shifted comparison below turns proxy-score accuracy into two estimates: one for upper-level sets and one for threshold excesses.  These are the two quantities needed to compare the localized denominators generated by \(G_\star\) and by \(\widehat G\).

\begin{proposition}[Level-set sandwich]
\label{prop:level-set-sandwich}
Assume \(\widehat G\asymp_{\delta,\tau}G_\star\).  Then for every \(s\ge0\),
\begin{equation*}
  \{G_\star>e^\delta(s+\tau)-\tau\}
  \subset
  \{\widehat G>s\}
  \subset
  \{G_\star>e^{-\delta}(s+\tau)-\tau\}.
\end{equation*}
\end{proposition}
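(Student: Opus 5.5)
The plan is to unpack the two cone inequalities contained in \(\widehat G\asymp_{\delta,\tau}G_\star\) and use each for one inclusion. By Lemma~\ref{lem:cone-log-equivalence}, the hypothesis is equivalent to the pointwise two-sided bound
\begin{equation*}
  e^{-\delta}\big(G_\star(x)+\tau\big)\le \widehat G(x)+\tau\le e^{\delta}\big(G_\star(x)+\tau\big)
\end{equation*}
on the region under consideration. Both inclusions then follow by adding \(\tau\) to a strict inequality and rescaling by a positive factor. I will fix \(s\ge0\) and treat the two inclusions separately.

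For the left inclusion, I take \(x\) with \(G_\star(x)>e^\delta(s+\tau)-\tau\). Then \(G_\star(x)+\tau>e^\delta(s+\tau)\). Combining this with the lower cone bound \(\widehat G(x)+\tau\ge e^{-\delta}(G_\star(x)+\tau)\) gives
\begin{equation*}
  \widehat G(x)+\tau>e^{-\delta}e^{\delta}(s+\tau)=s+\tau,
\end{equation*}
so \(\widehat G(x)>s\).

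For the right inclusion, I take \(x\) with \(\widehat G(x)>s\). Then \(s+\tau<\widehat G(x)+\tau\le e^\delta(G_\star(x)+\tau)\). Multiplying by \(e^{-\delta}>0\) gives \(G_\star(x)+\tau>e^{-\delta}(s+\tau)\), which is the required membership.

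There is no genuine obstacle here. The only points needing care are bookkeeping ones. First, strict inequalities must survive: this holds because multiplying by \(e^{\pm\delta}>0\) preserves strictness and the cone bounds are non-strict. Second, the correct cone direction must be used for each inclusion, namely the lower bound for the left inclusion and the upper bound for the right one. I will also note that the thresholds \(e^{\pm\delta}(s+\tau)-\tau\) may be negative when \(\delta\) is large relative to \(s\). This is harmless, since the sets are then simply larger, and nonnegativity of the scores plays no role beyond guaranteeing that the shifted quantities are positive, as in Lemma~\ref{lem:cone-log-equivalence}.
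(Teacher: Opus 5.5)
Your proof is correct and follows the paper's argument exactly: the lower cone bound \(\widehat G+\tau\ge e^{-\delta}(G_\star+\tau)\) gives the left inclusion, and the upper cone bound \(\widehat G+\tau\le e^{\delta}(G_\star+\tau)\) gives the right one. The two-sided bound is already the definition of \(\asymp_{\delta,\tau}\), so citing Lemma~\ref{lem:cone-log-equivalence} is harmless but unnecessary.
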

\begin{proof}
If \(G_\star(x)>e^\delta(s+\tau)-\tau\), then
\begin{equation*}
  e^{-\delta}(G_\star(x)+\tau)-\tau>s.
\end{equation*}
The lower cone bound implies \(\widehat G(x)+\tau\ge e^{-\delta}(G_\star(x)+\tau)\), hence \(\widehat G(x)>s\).  Conversely, if \(\widehat G(x)>s\), then \(\widehat G(x)+\tau>s+\tau\).  The upper cone bound gives \(G_\star(x)>e^{-\delta}(s+\tau)-\tau\).
\end{proof}

The next statement records the corresponding comparison for the threshold excess.  This is the quantity that determines the denominator strength after activation.

\begin{proposition}[Positive-part sandwich]
\label{prop:positive-part-sandwich}
Assume \(\widehat G\asymp_{\delta,\tau}G_\star\) on a region.  Let \(\widehat s\ge0\) be a proxy threshold and let \(s^-\le s^+\) be two growth-score thresholds such that
\begin{equation}
  e^\delta(s^-+\tau)-\tau
  \le
  \widehat s
  \le
  e^{-\delta}(s^++\tau)-\tau .
  \label{eq:positive-part-threshold-compatibility}
\end{equation}
Then, on the same region,
\begin{equation}
  e^{-\delta}(G_\star(x)-s^+)_+
  \le
  (\widehat G(x)-\widehat s)_+
  \le
  e^\delta(G_\star(x)-s^-)_+ .
  \label{eq:positive-part-sandwich}
\end{equation}
\end{proposition}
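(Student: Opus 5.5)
The plan is to prove both inequalities in \eqref{eq:positive-part-sandwich} pointwise. I rewrite each threshold excess as a difference of \(\tau\)-shifted quantities and then apply the two cone bounds from Lemma~\ref{lem:cone-log-equivalence} together with the threshold compatibility \eqref{eq:positive-part-threshold-compatibility}.

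The key identity is
\[
  \widehat G(x)-\widehat s=(\widehat G(x)+\tau)-(\widehat s+\tau).
\]
Both terms on the right are positive, so multiplicative bounds on each of them become additive bounds on the difference. Fix \(x\) in the region.

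\textbf{Lower bound.} I use the lower cone bound \(\widehat G(x)+\tau\ge e^{-\delta}(G_\star(x)+\tau)\). I also use the right half of \eqref{eq:positive-part-threshold-compatibility}, rewritten as \(\widehat s+\tau\le e^{-\delta}(s^++\tau)\). Subtracting gives
\[
  \widehat G(x)-\widehat s\ \ge\ e^{-\delta}(G_\star(x)+\tau)-e^{-\delta}(s^++\tau)=e^{-\delta}\big(G_\star(x)-s^+\big).
\]

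\textbf{Upper bound.} I use the upper cone bound \(\widehat G(x)+\tau\le e^{\delta}(G_\star(x)+\tau)\) and the left half of \eqref{eq:positive-part-threshold-compatibility}, namely \(\widehat s+\tau\ge e^{\delta}(s^-+\tau)\). These give
\[
  \widehat G(x)-\widehat s\le e^{\delta}\big(G_\star(x)-s^-\big).
\]

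\textbf{Passing to positive parts.} To finish, I use two elementary facts about \(y\mapsto y_+\). First, it is nondecreasing. Second, it is positively homogeneous: \((cy)_+=c\,y_+\) for \(c>0\). Applying these to the two linear inequalities, with \(c=e^{\mp\delta}\), yields exactly \eqref{eq:positive-part-sandwich}.

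\textbf{Where care is needed.} There is no real obstacle. The only point to check is that the two threshold constraints are paired with the correct cone directions: the upper constraint on \(\widehat s\) with the lower bound on \(\widehat G\), and the lower constraint with the upper bound. The nonnegativity hypotheses \(\widehat s\ge0\) and \(\tau>0\) ensure the shifted quantities are positive, but the argument only needs the linear inequalities above. The ordering \(s^-\le s^+\) is not used; it is only implied by the compatibility condition.
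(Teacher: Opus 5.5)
Your proof is correct and is essentially the paper's argument: pair the lower cone bound with the right-hand threshold constraint and the upper cone bound with the left-hand one to get the two linear inequalities, then pass to positive parts using monotonicity and positive homogeneity of \(y\mapsto y_+\). Your side remark that \(s^-\le s^+\) already follows from the compatibility condition (given \(\widehat s\ge0\), \(\tau>0\), \(\delta\ge0\)) is also accurate.
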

\begin{proof}
The upper cone bound gives
\begin{equation*}
  \widehat G(x)\le e^\delta(G_\star(x)+\tau)-\tau .
\end{equation*}
Using the left inequality in \eqref{eq:positive-part-threshold-compatibility},
\begin{equation*}
  \widehat G(x)-\widehat s
  \le
  e^\delta(G_\star(x)+\tau)-\tau-\widehat s
  \le
  e^\delta(G_\star(x)-s^-).
\end{equation*}
Taking positive parts gives the upper bound in \eqref{eq:positive-part-sandwich}.  The lower cone bound gives
\begin{equation*}
  \widehat G(x)\ge e^{-\delta}(G_\star(x)+\tau)-\tau .
\end{equation*}
Using the right inequality in \eqref{eq:positive-part-threshold-compatibility},
\begin{equation*}
  \widehat G(x)-\widehat s
  \ge
  e^{-\delta}(G_\star(x)+\tau)-\tau-\widehat s
  \ge
  e^{-\delta}(G_\star(x)-s^+).
\end{equation*}
Taking positive parts gives the lower bound.
\end{proof}

\subsection{Log-scale proxy construction}
Lemma~\ref{lem:cone-log-equivalence} reduces cone comparison to uniform
approximation on the shifted log scale.  The target is
\[
  L_\star(x):=\log(G_\star(x)+\tau),
  \qquad \tau>0.
\]
We fit a \(\tau\)-shifted log-scale function \(h\) to approximate \(L_\star\), and then return to
the score scale by setting
\[
  G_h(x):=\bigl[\exp\{h(x)\}-\tau\bigr]_+.
\]
Thus  \(|h-L_\star|<\delta\) gives
\(G_h\asymp_{\delta,\tau}G_\star\); the positive part only enforces
nonnegativity.

For a chosen feature map \(\phi\), set
\begin{equation*}
  h_\omega(x)=\omega^\top\phi(x),
\end{equation*}
and
\begin{equation*}
  G_\omega(x):=\bigl[\exp\{\omega^\top\phi(x)\}-\tau\bigr]_+.
\end{equation*}
The feature map \(\phi\) should represent the growth scale of \(L_\star\).  It need not encode every detail of \(G_\star\).  The default low-cost choice is motivated by the following elementary bound.

By \eqref{eq:companion-growth-indicator-input} with \(\bar A=G_\star\), there
exist constants \(c_1,c_2>0\) and \(b\ge0\) such that
\[
  c_1
  \le
  G_\star(x)+\tau
  \le
  c_2(1+\norm{x})^b .
\]
Hence, with \(r(x)=\log(1+\norm{x})\),
\[
  \log c_1
  \le
  L_\star(x)
  \le
  \log c_2 + b r(x).
\]
Thus the low-cost features \(1,r,r^2\) provide a simple bracket for the
growth scale of \(L_\star\).

The design in this paper uses the following default feature map:
\begin{equation*}
  \phi(x)=\bigl(1,r(x),r(x)^2\bigr),
  \qquad r(x)=\log(1+\norm{x}).
\end{equation*}
The constant and \(\log(1+\norm{x})\) terms capture the log-growth bound implied by \eqref{eq:companion-growth-indicator-input}; the quadratic term gives extra flexibility for finite-sample fitting.
Other low-cost or problem-specific features can be used when the geometry of \(G_\star\) suggests a richer approximation class.

The coefficient vector is selected by least squares on pilot states \(X_1,\ldots,X_N\) using score measurements \(Y_i\) of the growth score:
\begin{equation}
  \widehat\omega
  \in
  \arg\min_\omega
  \sum_{i=1}^N
  \left(\omega^\top\phi(X_i)-\log(Y_i+\tau)\right)^2.
  \label{eq:log-proxy-fit}
\end{equation}
The proxy score used by the envelope is
\begin{equation*}
  \widehat G(x)=G_{\widehat\omega}(x)
  =\left[\exp\{\widehat\omega^\top\phi(x)\}-\tau\right]_+.
\end{equation*}

\subsection{A sufficient log-scale route to cone comparison}
\label{subsec:l2-logfit-cone}

The proxy score \(\widehat G\) is built from a pilot chain and a simple feature
map.  This construction is inexpensive, but it does not by itself imply global
cone comparison with \(G_\star\).  We therefore first state a local sufficient
condition on the pilot-explored region, prove that it implies cone comparison,
and then explain when the condition is satisfied for the default feature map
used in the experiments.

Let \(\Omega_0\) denote the pilot region, namely the part of state space covered
by the pilot run and used to fit the proxy score.  Let \(\nu\) be the pilot law
on \(\Omega_0\), and set
\begin{equation*}
  \mathcal H_\phi
  =
  \{h_\omega(x)=\omega^\top\phi(x):\omega\in\mathbb R^p\}.
\end{equation*}
The shifted log target is
\begin{equation*}
  L_\star(x)=\log(G_\star(x)+\tau),
  \qquad \tau>0.
\end{equation*}

\begin{assumption}[Local log-score calibration on the pilot region]
\label{ass:log-feature-calibration}
There exist constants \(C_\phi<\infty\), \(b_\phi\ge0\), and a function
\(h_0\in\mathcal H_\phi\) such that
\begin{align}
  \|u\|_{L^\infty(\Omega_0)}
  &\le C_\phi\|u\|_{L^2(\nu)},
  \qquad u\in\mathcal H_\phi,
  \label{eq:feature-norm-control}\\
  \|h_0-L_\star\|_{L^\infty(\Omega_0)}
  &\le b_\phi .
  \label{eq:feature-bias-control}
\end{align}
\end{assumption}

The first condition is a local coverage and nondegeneracy condition for the
chosen feature map.  It says that, within the finite-dimensional feature class,
small \(L^2(\nu)\) error cannot hide a large error somewhere in the pilot
region.  The second condition records the local approximation bias of the
chosen feature class for the shifted log target \(L_\star\).

\begin{proposition}[Local log-score control implies cone comparison]
\label{prop:l2-to-cone}
Assume Assumption~\ref{ass:log-feature-calibration}.  Let
\(\widehat h\in\mathcal H_\phi\) satisfy
\begin{equation*}
  \|\widehat h-h_0\|_{L^2(\nu)}\le \varepsilon .
\end{equation*}
Define
\begin{equation*}
  \widehat G(x)=\bigl[\exp\{\widehat h(x)\}-\tau\bigr]_+ .
\end{equation*}
Then, on \(\Omega_0\),
\begin{equation*}
  \|\widehat h-L_\star\|_{L^\infty(\Omega_0)}
  \le b_\phi+C_\phi\varepsilon .
\end{equation*}
Consequently, with \(\delta=b_\phi+C_\phi\varepsilon\),
\begin{equation*}
  e^{-\delta}(G_\star(x)+\tau)
  \le
  \widehat G(x)+\tau
  \le
  e^\delta(G_\star(x)+\tau),
  \qquad x\in\Omega_0.
\end{equation*}
That is, \(\widehat G\asymp_{\delta,\tau}G_\star\) on \(\Omega_0\).
\end{proposition}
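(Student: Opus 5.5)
The plan is to split the statement into its two parts: a uniform log-scale bound obtained from the triangle inequality together with Assumption~\ref{ass:log-feature-calibration}, and the passage from that bound back to the score scale through the positive-part map. Lemma~\ref{lem:cone-log-equivalence} handles the second part once one small issue is addressed. \(\widehat G+\tau\) equals \(\max\{e^{\widehat h},\tau\}\), not \(e^{\widehat h}\), so the positive part needs a separate check.

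\emph{Step 1 (uniform log-scale error).} For \(x\in\Omega_0\) I will write
\[
  \widehat h-L_\star=(\widehat h-h_0)+(h_0-L_\star).
\]
Both \(\widehat h\) and \(h_0\) lie in the linear space \(\mathcal H_\phi\), so \(u=\widehat h-h_0\in\mathcal H_\phi\). The norm-control inequality \eqref{eq:feature-norm-control} then gives \(\|\widehat h-h_0\|_{L^\infty(\Omega_0)}\le C_\phi\|\widehat h-h_0\|_{L^2(\nu)}\le C_\phi\varepsilon\). The bias bound \eqref{eq:feature-bias-control} controls the second term by \(b_\phi\). Adding the two yields \(\|\widehat h-L_\star\|_{L^\infty(\Omega_0)}\le b_\phi+C_\phi\varepsilon=\delta\). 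The only point to note is that membership of the difference in \(\mathcal H_\phi\) uses linearity of \(\omega\mapsto\omega^\top\phi\).

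\emph{Step 2 (exponentiation).} From Step 1, for each \(x\in\Omega_0\),
\[
  e^{-\delta}(G_\star(x)+\tau)\le e^{\widehat h(x)}\le e^{\delta}(G_\star(x)+\tau),
\]
since \(L_\star=\log(G_\star+\tau)\). This is exactly \eqref{eq:log-ratio-equivalence} for the unclipped function \(e^{\widehat h}-\tau\).

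\emph{Step 3 (the positive part).} I expect this to be the only real obstacle, and it is elementary. Because \(\widehat G(x)+\tau=\max\{e^{\widehat h(x)},\tau\}\), the lower bound is immediate: \(\widehat G+\tau\ge e^{\widehat h}\ge e^{-\delta}(G_\star+\tau)\). For the upper bound I split into two cases.
\begin{itemize}
  \item If \(e^{\widehat h(x)}\ge\tau\), then \(\widehat G(x)+\tau=e^{\widehat h(x)}\le e^\delta(G_\star(x)+\tau)\) by Step 2.
  \item If \(e^{\widehat h(x)}<\tau\), then \(\widehat G(x)+\tau=\tau\le e^\delta(G_\star(x)+\tau)\), using \(G_\star\ge0\) and \(\delta\ge0\).
\end{itemize}
Together these give the two-sided cone bound on \(\Omega_0\), which is \(\widehat G\asymp_{\delta,\tau}G_\star\) on \(\Omega_0\) by definition, or equivalently by Lemma~\ref{lem:cone-log-equivalence}.
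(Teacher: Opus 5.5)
Your proposal is correct and follows the paper's proof essentially step for step. The paper also combines the triangle inequality with the norm-control and bias bounds, exponentiates, and then uses \(\widehat G+\tau=\max\{e^{\widehat h},\tau\}\) together with \(G_\star+\tau\ge\tau\). Your explicit two-case check for the upper bound just spells out that last one-line remark.
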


\begin{proof}
By the triangle inequality and Assumption~\ref{ass:log-feature-calibration},
\begin{align*}
  \|\widehat h-L_\star\|_{L^\infty(\Omega_0)}
  &\le
  \|h_0-L_\star\|_{L^\infty(\Omega_0)}
  +
  \|\widehat h-h_0\|_{L^\infty(\Omega_0)}\\
  &\le
  b_\phi+C_\phi\varepsilon .
\end{align*}
Exponentiating gives
\[
  e^{-\delta}(G_\star+\tau)
  \le
  \exp\{\widehat h\}
  \le
  e^\delta(G_\star+\tau).
\]
Since
\[
  \widehat G+\tau=\max\{\exp\{\widehat h\},\tau\}
\]
and \(G_\star+\tau\ge \tau\), the same two-sided bound holds for
\(\widehat G+\tau\).
\end{proof}

\paragraph{Verification for the default feature.}
We now check the norm-control condition \eqref{eq:feature-norm-control} for the
default feature map used in the proxy construction.  Recall that
\begin{equation}
  \phi(x)=\bigl(1,r(x),r(x)^2\bigr),
  \qquad
  r(x)=\log(1+\norm{x}).
  \label{eq:default-log-radial-feature}
\end{equation}

\begin{lemma}[Norm control for the default log-radial feature]
\label{lem:default-feature-norm-control}
Assume that \(\Omega_0\) is bounded, so that
\[
  0\le r(x)\le R_0,
  \qquad x\in\Omega_0.
\]
Let
\[
  \psi(r)=(1,r,r^2)^\top,
  \qquad
  M_\nu=\int_{\Omega_0}\psi(r(x))\psi(r(x))^\top\,\nu(dx).
\]
If \(M_\nu\) is positive definite, then there exists \(C_\phi<\infty\) such that
\[
  \|u\|_{L^\infty(\Omega_0)}
  \le
  C_\phi\|u\|_{L^2(\nu)},
  \qquad
  u\in\operatorname{span}\{1,r,r^2\}.
\]
In particular, one may take
\[
  C_\phi
  =
  \frac{\sup_{0\le r\le R_0}\|\psi(r)\|}
       {\lambda_{\min}(M_\nu)^{1/2}} .
\]
\end{lemma}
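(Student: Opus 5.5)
Every $u$ in $\operatorname{span}\{1,r,r^2\}$ can be written as $u(x)=a^\top\psi(r(x))$ for a coefficient vector $a\in\R^3$. The plan is to bound the sup-norm and the $L^2(\nu)$-norm of $u$ separately in terms of $\norm{a}$, and then compare the two bounds.

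For the sup-norm, Cauchy--Schwarz gives, for each $x\in\Omega_0$, $|u(x)|\le\norm{a}\,\norm{\psi(r(x))}$. Since $0\le r(x)\le R_0$ on $\Omega_0$, this yields
\[
  \|u\|_{L^\infty(\Omega_0)}\le \norm{a}\sup_{0\le r\le R_0}\norm{\psi(r)}.
\]
The supremum is finite because $\psi$ is continuous on the compact interval $[0,R_0]$; explicitly it equals $(1+R_0^2+R_0^4)^{1/2}$, since each component is nondecreasing in $r$.

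For the $L^2(\nu)$-norm, expand the square and integrate:
\[
  \|u\|_{L^2(\nu)}^2=\int_{\Omega_0}\bigl(a^\top\psi(r(x))\bigr)^2\,\nu(dx)=a^\top M_\nu a .
\]
Because $M_\nu$ is symmetric positive definite, the Rayleigh quotient bound gives $a^\top M_\nu a\ge\lambda_{\min}(M_\nu)\norm{a}^2$, with $\lambda_{\min}(M_\nu)>0$. Hence
\[
  \norm{a}\le\lambda_{\min}(M_\nu)^{-1/2}\|u\|_{L^2(\nu)} .
\]
Substituting this into the sup-norm bound gives the claim with the stated $C_\phi$.

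Two details need care. First, the representation of $u$ by $a$ need not be unique if $1,r,r^2$ are linearly dependent as functions on $\Omega_0$. Positive definiteness of $M_\nu$ rules this out $\nu$-a.e., and the argument above works for any representing $a$ anyway, since both bounds are stated in terms of $\norm{a}$. Second, the sup-norm should be read over $\Omega_0$, understood as the support region of $\nu$. The pointwise bound holds for every $x\in\Omega_0$, so it does not matter whether $\nu$ charges all of $\Omega_0$.

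There is no real obstacle here. The only substantive input is the positivity of $\lambda_{\min}(M_\nu)$, which is assumed.
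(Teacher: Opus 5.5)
Your proposal is correct and follows the paper's own proof: write $u=a^\top\psi(r(x))$, bound $\|u\|_{L^\infty(\Omega_0)}\le \sup_{0\le r\le R_0}\|\psi(r)\|\,\|a\|$ by Cauchy--Schwarz, and bound $\|u\|_{L^2(\nu)}^2=a^\top M_\nu a\ge\lambda_{\min}(M_\nu)\|a\|^2$ from below. Your extra remarks are correct but not needed for the result: the explicit value $(1+R_0^2+R_0^4)^{1/2}$ of the supremum, and the observation that the argument works for any representing coefficient vector $a$.
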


\begin{proof}
Write \(u(x)=a^\top\psi(r(x))\).  Then
\[
  \|u\|_{L^\infty(\Omega_0)}
  \le
  \sup_{0\le r\le R_0}\|\psi(r)\|\,\|a\|.
\]
On the other hand,
\[
  \|u\|_{L^2(\nu)}^2
  =
  a^\top M_\nu a
  \ge
  \lambda_{\min}(M_\nu)\|a\|^2 .
\]
Combining the two estimates gives the claim.
\end{proof}

The condition \(M_\nu\succ0\) is a nondegenerate radial-coverage condition for
the pilot chain.  For the default feature, no directional coverage is required
by this particular norm-control step, because the feature is radial.  If the
pilot samples are concentrated in an extremely narrow radial band, then
\(M_\nu\) becomes nearly singular and the constant \(C_\phi\) becomes large.

\paragraph{Interpreting the approximation bias for polynomial-tail scores.}
We next record a sufficient condition for the approximation-bias condition
\eqref{eq:feature-bias-control}.  The point is not that polynomial upper
growth alone forces the default log-radial proxy to be accurate.  Rather, when
the shifted growth score has a nondegenerate polynomial tail scale on the pilot
region, the shifted log target is well represented by log-radial features.

\begin{lemma}[Log-radial bias for polynomial-growth scores]
\label{lem:poly-growth-log-bias}
Assume that, on \(\Omega_0\), there exist \(p\ge0\) and constants
\(0<c_1\le c_2<\infty\) such that
\begin{equation}
  c_1(1+\norm{x})^p
  \le
  G_\star(x)+\tau
  \le
  c_2(1+\norm{x})^p .
  \label{eq:radially-comparable-polynomial-score}
\end{equation}
Let \(r(x)=\log(1+\norm{x})\) and
\[
  \mathcal H_\phi=\operatorname{span}\{1,r,r^2\}.
\]
Then there exists \(h_0\in\mathcal H_\phi\) such that
\[
  \|h_0-L_\star\|_{L^\infty(\Omega_0)}
  \le
  \log(c_2/c_1).
\]
In particular, one may take
\[
  h_0(x)=\log c_1+p\,r(x).
\]
\end{lemma}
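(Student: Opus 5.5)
The plan is to verify directly that the explicit candidate \(h_0(x)=\log c_1+p\,r(x)\) does the job. First, \(h_0\in\mathcal H_\phi\), since it is the combination \(\log c_1\cdot 1+p\cdot r+0\cdot r^2\) of the default features in \eqref{eq:default-log-radial-feature}.

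Next, take logarithms in the two-sided bound \eqref{eq:radially-comparable-polynomial-score}. All quantities involved are strictly positive: \(c_1>0\), \(1+\norm{x}\ge1\), and \(G_\star+\tau\ge\tau>0\). Writing \(r=r(x)\), this gives, for every \(x\in\Omega_0\),
\begin{equation*}
  \log c_1+p\,r
  \le
  L_\star(x)
  \le
  \log c_2+p\,r .
\end{equation*}
Subtracting \(h_0(x)=\log c_1+p\,r\) then yields
\begin{equation*}
  0\le L_\star(x)-h_0(x)\le \log c_2-\log c_1=\log(c_2/c_1).
\end{equation*}
Taking the supremum over \(x\in\Omega_0\) gives
\begin{equation*}
  \|h_0-L_\star\|_{L^\infty(\Omega_0)}\le\log(c_2/c_1),
\end{equation*}
which is exactly the bias condition \eqref{eq:feature-bias-control} with \(b_\phi=\log(c_2/c_1)\).

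There is no real obstacle here; the argument is pure monotonicity of the logarithm. The only point worth noting is a possible refinement: the centered choice \(h_0=\tfrac12\log(c_1c_2)+p\,r\) would halve the bound to \(\tfrac12\log(c_2/c_1)\). The statement only asks for the cruder bound, so I would keep the stated \(h_0\).
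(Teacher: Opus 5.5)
Your proposal is correct and matches the paper's proof: take logarithms of the two-sided polynomial bound to get \(\log c_1+p\,r\le L_\star\le\log c_2+p\,r\), then subtract the stated \(h_0\). Your remark that the centered choice \(\tfrac12\log(c_1c_2)+p\,r\) halves the bias to \(\tfrac12\log(c_2/c_1)\) is also valid, though the lemma does not need it.
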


\begin{proof}
By \eqref{eq:radially-comparable-polynomial-score},
\[
  \log c_1+p\,r(x)
  \le
  L_\star(x)
  \le
  \log c_2+p\,r(x),
  \qquad x\in\Omega_0 .
\]
Taking \(h_0(x)=\log c_1+p\,r(x)\) gives
\[
  0
  \le
  L_\star(x)-h_0(x)
  \le
  \log(c_2/c_1),
  \qquad x\in\Omega_0,
\]
which proves the claim.
\end{proof}

For instance, if on \(\Omega_0\) there exist constants
\(0<c_b\le C_b<\infty\) and \(q\ge2\) such that
\[
  c_b(1+\norm{x})^{q-1}
  \le
  \norm{b(x)}
  \le
  C_b(1+\norm{x})^{q-1},
\]
then
\[
  c_b(1+\norm{x})^{q-2}
  \le
  G_\star(x)
  \le
  C_b(1+\norm{x})^{q-2}.
\]
Thus the default feature class contains the required affine log-radial scale,
while the additional \(r(x)^2\) term gives finite-range fitting flexibility.
This condition is satisfied by standard coercive polynomial-tail risks, not
only by radial ones.  It is enough that, in the tail,
\[
  b(x)=b_q(x)+o(\norm{x}^{q-1}),
  \qquad
  \norm{b_q(x)}\ge c_q\norm{x}^{q-1},
\]
where \(b_q\) is the leading polynomial drift.  Then
\(\norm{b(x)}\asymp \norm{x}^{q-1}\), and the shifted growth score inherits
the corresponding radial scale whenever it is constructed from this drift
growth.  This covers coercive radial polynomial potentials, separable
coercive polynomial potentials, and regularized empirical risks whose
coercive polynomial regularizer dominates lower-order empirical-gradient
terms.  It excludes degenerate polynomial directions where the leading drift
vanishes, which is consistent with its role as a tail-calibration condition.

The default log-radial feature is only one low-cost choice.  If the shifted
growth score has substantial nonradial structure on the pilot region, this
appears as a larger calibration bias \(b_\phi\).  The same argument applies
after adding simple problem-specific features, such as block norms,
directional quadratic ratios, layer-wise norms, or other low-dimensional
features suggested by the risk structure.  Only the approximation bias
\(b_\phi\) and the finite-dimensional norm-control constant \(C_\phi\)
change.

If Assumption~\ref{ass:log-feature-calibration} is unavailable, an \(L^2\)
log-scale error still gives high-mass control under the pilot law:
\begin{equation*}
  \nu\{|\widehat h-L_\star|>\delta\}
  \le
  \frac{\|\widehat h-L_\star\|_{L^2(\nu)}^2}{\delta^2}.
\end{equation*}
This weaker statement is useful as a diagnostic, but the clean level-set and
positive-part comparisons used below rely on the cone comparison.

\subsection{Quantile thresholds and threshold errors}
\label{subsec:quantile-threshold-errors}

Let \(\mathsf Q_q(Z)\) denote the lower \(q\)-quantile of a real-valued random
variable \(Z\):
\[
  \mathsf Q_q(Z):=\inf\{t\in\mathbb R:\mathbb P(Z\le t)\ge q\}.
\]
All population quantiles in this subsection are taken under the pilot law
\(\nu\).  We use the local cone comparison from
Proposition~\ref{prop:l2-to-cone} on \(\Omega_0\); hence the quantile
statements concern the pilot law on the pilot region.

If \(G_\star\) were available on the pilot law, the reference population
thresholds would be
\begin{equation*}
  R_{q_R}^\star=\mathsf Q_{q_R}(G_\star),
  \qquad
  S_{q_S}^\star=\mathsf Q_{q_S}(G_\star),
  \qquad 0<q_R<q_S<1.
\end{equation*}
Here \(R_{q_R}^\star\) marks the boundary between the safe and intermediate
regions, while \(S_{q_S}^\star\) marks the onset of the tail region.

The practical construction replaces these reference thresholds in two steps:
\[
  \mathsf Q_q(G_\star)
  \quad\longrightarrow\quad
  \mathsf Q_q(\widehat G)
  \quad\longrightarrow\quad
  \widehat{\mathsf Q}_{N,q}(\widehat G).
\]
The first arrow is the population proxy-threshold error caused by replacing
\(G_\star\) with \(\widehat G\).  The second arrow is the empirical
pilot-threshold error caused by estimating the proxy quantile from the Markov
pilot chain.  The finite-pilot bound below is stated for a fixed fitted score
\(\widehat G\), which is exact under sample splitting.  If the same pilot run
is reused for fitting and threshold estimation, the deterministic quantile
argument is unchanged, but the concentration input must be justified for the
resulting data-dependent fitted score.

\paragraph{Population proxy-threshold error.}
This error is the difference between the population thresholds of the growth
score and those of the proxy score.  Cone comparison controls it before any
pilot sampling error is added.  We first record two elementary properties of
lower quantiles.
\begin{lemma}[Monotonicity and affine equivariance of lower quantiles]
\label{lem:quantile-basic}
Let \(X,Y\) be real random variables under the pilot law.  If \(X\le Y\) almost
surely, then
\begin{equation*}
  \mathsf Q_q(X)\le \mathsf Q_q(Y)
\end{equation*}
for every \(q\in(0,1)\).  If \(a>0\) and \(c\in\R\), then
\begin{equation*}
  \mathsf Q_q(aX+c)=a\mathsf Q_q(X)+c.
\end{equation*}
\end{lemma}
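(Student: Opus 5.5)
The plan is to work directly from the definition \(\mathsf Q_q(Z)=\inf\{t:\mathbb P(Z\le t)\ge q\}\) and compare the sets of admissible \(t\) appearing in the infimum. Write \(T_Z=\{t\in\R:\mathbb P(Z\le t)\ge q\}\), so that \(\mathsf Q_q(Z)=\inf T_Z\). Since \(q\in(0,1)\), \(T_Z\) is nonempty (because \(\mathbb P(Z\le t)\to1\)) and bounded below (because \(\mathbb P(Z\le t)\to0\) as \(t\to-\infty\)). Hence the quantile is a finite real number, and any inclusion between the sets \(T_Z\) translates into an inequality between their infima.

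For monotonicity, assume \(X\le Y\) almost surely. For each \(t\), the event \(\{Y\le t\}\) is contained in \(\{X\le t\}\) up to a null set, so \(\mathbb P(X\le t)\ge\mathbb P(Y\le t)\). Therefore every \(t\in T_Y\) also lies in \(T_X\), i.e. \(T_Y\subset T_X\). Taking infima over a larger set gives a smaller value, so \(\inf T_X\le\inf T_Y\), which is the claim \(\mathsf Q_q(X)\le\mathsf Q_q(Y)\).

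For affine equivariance, let \(a>0\) and \(c\in\R\). Because \(a>0\), the map \(t\mapsto at+c\) is strictly increasing, and the events satisfy \(\{aX+c\le t\}=\{X\le (t-c)/a\}\). Consequently \(t\in T_{aX+c}\) if and only if \((t-c)/a\in T_X\), so \(T_{aX+c}=aT_X+c\). An increasing affine bijection of \(\R\) commutes with taking infima of nonempty, bounded-below sets, and this yields \(\mathsf Q_q(aX+c)=a\,\mathsf Q_q(X)+c\).

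There is no substantive obstacle here. The only points requiring care are that \(a>0\) is needed so that the inequality direction is preserved (for \(a<0\) one would obtain an upper quantile instead), and that "almost surely" is handled by noting that null sets do not affect the probabilities being compared.
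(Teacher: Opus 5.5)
Your proof is correct and follows the paper's argument. Both compare distribution functions pointwise, via \(F_Y\le F_X\) and \(F_{aX+c}(t)=F_X((t-c)/a)\), and pass to the infimum defining the lower quantile. Your admissible-set formulation (\(T_Y\subset T_X\), \(T_{aX+c}=aT_X+c\), with finiteness from \(0<q<1\)) spells out the final step that the paper leaves implicit.
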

\begin{proof}
If \(X\le Y\), then \(\{Y\le t\}\subset\{X\le t\}\), so
\(F_Y(t)\le F_X(t)\).  Hence the lower \(q\)-quantile of \(Y\) cannot lie to
the left of that of \(X\).  The affine identity follows from
\(F_{aX+c}(t)=F_X((t-c)/a)\) for \(a>0\).
\end{proof}

\begin{proposition}[Cone comparison controls population proxy quantiles]
\label{prop:quantile-sandwich}
Assume \(\widehat G\asymp_{\delta,\tau}G_\star\).  Then for every
\(q\in(0,1)\),
\begin{equation}
  e^{-\delta}\bigl(\mathsf Q_q(G_\star)+\tau\bigr)-\tau
  \le
  \mathsf Q_q(\widehat G)
  \le
  e^\delta\bigl(\mathsf Q_q(G_\star)+\tau\bigr)-\tau.
  \label{eq:quantile-sandwich}
\end{equation}
In particular, the population proxy thresholds at levels \(q_R\) and \(q_S\)
are controlled by the corresponding growth-score thresholds up to the shifted
multiplicative factors \(e^{\pm\delta}\).
\end{proposition}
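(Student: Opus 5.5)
The plan is to turn the pointwise cone comparison into two almost-sure inequalities between random variables under the pilot law, and then push these through the lower-quantile map using Lemma~\ref{lem:quantile-basic}. By Lemma~\ref{lem:cone-log-equivalence}, the hypothesis \(\widehat G\asymp_{\delta,\tau}G_\star\) on \(\Omega_0\) gives, for \(\nu\)-a.e.\ \(x\),
\begin{equation*}
  e^{-\delta}\bigl(G_\star(x)+\tau\bigr)-\tau
  \le
  \widehat G(x)
  \le
  e^{\delta}\bigl(G_\star(x)+\tau\bigr)-\tau .
\end{equation*}
Viewed as random variables in \(x\sim\nu\), this reads \(X_-\le \widehat G\le X_+\) almost surely, where \(X_\pm:=e^{\pm\delta}G_\star+(e^{\pm\delta}-1)\tau\).

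The first step applies the monotonicity part of Lemma~\ref{lem:quantile-basic} twice. This gives
\begin{equation*}
  \mathsf Q_q(X_-)\le\mathsf Q_q(\widehat G)\le\mathsf Q_q(X_+)
  \qquad\text{for every } q\in(0,1).
\end{equation*}

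The second step uses affine equivariance with slope \(a=e^{\pm\delta}>0\) and shift \(c=(e^{\pm\delta}-1)\tau\). This yields
\begin{equation*}
  \mathsf Q_q(X_\pm)=e^{\pm\delta}\mathsf Q_q(G_\star)+(e^{\pm\delta}-1)\tau
  =e^{\pm\delta}\bigl(\mathsf Q_q(G_\star)+\tau\bigr)-\tau .
\end{equation*}
Substituting these into the bracket from the first step gives exactly \eqref{eq:quantile-sandwich}. The statement about \(q_R\) and \(q_S\) is then just the specialisation \(q=q_R,q_S\).

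There is no real obstacle, since the whole argument rests on Lemma~\ref{lem:quantile-basic}. The only point needing care is that the cone comparison holds only on the region under consideration, here \(\Omega_0\). We must therefore check that \(\nu\) is supported on \(\Omega_0\), so that the pointwise sandwich holds \(\nu\)-almost surely; this is true by the definition of \(\nu\) as the pilot law on \(\Omega_0\). Beyond that, it is worth confirming that the lower-quantile monotonicity argument needs no continuity of distribution functions, and it does not: it only compares distribution functions pointwise, so ties or atoms in \(G_\star\) cause no difficulty.
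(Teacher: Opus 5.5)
Your proposal is correct and matches the paper's proof: you rewrite the cone comparison as the pointwise sandwich $e^{-\delta}(G_\star+\tau)-\tau\le\widehat G\le e^{\delta}(G_\star+\tau)-\tau$, then apply monotonicity and affine equivariance of lower quantiles (Lemma~\ref{lem:quantile-basic}). Your remark that $\nu$ is supported on $\Omega_0$, so the sandwich holds $\nu$-almost surely, makes explicit a point the paper leaves implicit.
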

\begin{proof}
The cone inequalities imply
\begin{equation*}
  e^{-\delta}(G_\star+\tau)-\tau
  \le
  \widehat G
  \le
  e^\delta(G_\star+\tau)-\tau.
\end{equation*}
The claim follows from Lemma~\ref{lem:quantile-basic} and affine equivariance.
\end{proof}

\paragraph{Empirical pilot-threshold error.}
This error is the difference between the population proxy threshold and the
empirical threshold computed from the pilot chain.  Let
\(\widehat{\mathsf Q}_{N,q}\) denote the empirical lower \(q\)-quantile of
pilot scores.  For real numbers \(z_1,\ldots,z_N\), define
\[
  \widehat{\mathsf Q}_{N,q}(z_1,\ldots,z_N)
  :=
  z_{(\lceil Nq\rceil)},
  \qquad q\in(0,1),
\]
where \(z_{(1)}\le\cdots\le z_{(N)}\) are the order statistics.  For fixed
pilot states \(X_1,\ldots,X_N\), write
\[
  \widehat{\mathsf Q}_{N,q}(f)
  :=
  \widehat{\mathsf Q}_{N,q}\bigl(f(X_1),\ldots,f(X_N)\bigr).
\]
Along the pilot trajectory, we evaluate
\(\widehat G(X_1),\ldots,\widehat G(X_N)\) and set
\begin{equation*}
  \widehat R_{q_R}=\widehat{\mathsf Q}_{N,q_R}(\widehat G),
  \qquad
  \widehat S_{q_S}=\widehat{\mathsf Q}_{N,q_S}(\widehat G),
  \qquad 0<q_R<q_S<1.
\end{equation*}

Finite pilot samples replace the population quantile of \(\widehat G(X)\) by
its empirical version along the pilot chain.  In this paragraph the fitted
score \(\widehat G\) is fixed.  Let
\[
  Z=\widehat G(X),\qquad X\sim\nu,
\]
let \(F_Z\) be the distribution function of \(Z\), and write
\[
  z_q=\mathsf Q_q(Z).
\]
For pilot states \(X_1,\ldots,X_N\), define
\[
  Z_i=\widehat G(X_i),
  \qquad
  \widehat F_N(t)=\frac1N\sum_{i=1}^N{\bf 1}\{Z_i\le t\},
  \qquad
  \widehat z_{N,q}=\widehat{\mathsf Q}_{N,q}(Z_1,\ldots,Z_N).
\]
The next lemma proves the empirical-quantile bound
\[
  |\widehat z_{N,q}-z_q|
  \le
  \frac{2}{c_q}\sqrt{\frac{\log(4/\zeta)}{N_{\rm eff}}}.
\]
The proof only needs two endpoint checks: \(\widehat F_N\) must remain below
\(q\) at \(z_q-r_N\), and it must have reached \(q\) by \(z_q+r_N\).

\begin{assumption}[Nondegenerate crossing at the pilot quantile]
\label{ass:quantile-margin}
For the level \(q\) under consideration, there are constants \(c_q>0\) and
\(r_q>0\) such that for every \(0\le r\le r_q\),
\begin{equation}
  F_Z(z_q-r)\le q-c_q r,
  \qquad
  F_Z(z_q+r)\ge q+c_q r.
  \label{eq:quantile-margin}
\end{equation}
\end{assumption}
The assumption says that the pilot-score distribution is not flat near the
selected threshold.  Moving a distance \(r\) to the right of \(z_q\) increases
the CDF by at least \(c_q r\), while moving the same distance to the left
decreases the CDF by at least \(c_q r\).

A standard sufficient condition is a positive local density at the threshold.
Suppose that \(Z\) has a density \(f_Z\), and that for some \(m_q>0\),
\begin{equation*}
  f_Z(t)\ge m_q
  \qquad
  \text{for all }t\in(z_q-\rho_q,z_q+\rho_q)
\end{equation*}
for some \(\rho_q>0\).  If \(F_Z(z_q)=q\), then for every
\(0\le r\le \rho_q\),
\begin{align*}
  F_Z(z_q+r)-q
  &=
  F_Z(z_q+r)-F_Z(z_q)
  =
  \int_{z_q}^{z_q+r} f_Z(t)\,dt
  \ge m_q r,\\
  q-F_Z(z_q-r)
  &=
  F_Z(z_q)-F_Z(z_q-r)
  =
  \int_{z_q-r}^{z_q} f_Z(t)\,dt
  \ge m_q r.
\end{align*}
Thus \eqref{eq:quantile-margin} holds after taking \(r_q\le\rho_q\) and any
\(c_q<m_q\).

\begin{assumption}[Markov pilot concentration at quantile endpoints]
\label{ass:markov-pilot-endpoint-concentration}
Let \(X_1,\ldots,X_N\) be the pilot Markov chain and let
\[
  Z_i=\widehat G(X_i),
  \qquad i=1,\ldots,N,
\]
where the fitted score \(\widehat G\) is treated as fixed.  There is an
effective sample size \(N_{\rm eff}\) such that, for every fixed threshold
\(t\) and every \(u>0\),
\begin{equation}
  \mathbb P\left(
    \left|
      \widehat F_N(t)-F_Z(t)
    \right|>u
  \right)
  \le
  2\exp\{-N_{\rm eff}u^2\}.
  \label{eq:markov-endpoint-concentration}
\end{equation}
\end{assumption}

Assumption~\ref{ass:markov-pilot-endpoint-concentration} is the Markov-chain
concentration input needed to control the pilot empirical quantiles.  Under
uniform ergodicity or suitable mixing conditions, such bounds follow from
Markov-chain concentration inequalities; see
\cite{GlynnOrmoneit2002,Paulin2015}.  The effective sample size
\(N_{\rm eff}\) summarizes the loss caused by dependence in the pilot chain. In
the proof below, the bound is applied only at the two endpoints \(z_q-r_N\) and
\(z_q+r_N\).

\begin{lemma}[Pilot quantile stability]
\label{lem:pilot-quantile-stability}
Assume Assumptions~\ref{ass:quantile-margin} and~\ref{ass:markov-pilot-endpoint-concentration}.  Let
\[
  \widehat z_{N,q}
  :=
  \inf\{z\in\mathbb R:\widehat F_N(z)\ge q\}
\]
be the lower empirical \(q\)-quantile.  Define
\[
  u_N:=\sqrt{\frac{\log(4/\zeta)}{N_{\rm eff}}},
  \qquad
  r_N:=\frac{2u_N}{c_q}.
\]
If \(r_N\le r_q\), then with probability at least \(1-\zeta\),
\begin{equation}
  \widehat z_{N,q}\in[z_q-r_N,z_q+r_N].
  \label{eq:empirical-quantile-error}
\end{equation}
Equivalently,
\[
  |\widehat z_{N,q}-z_q|
  \le
  \frac{2}{c_q}
  \sqrt{\frac{\log(4/\zeta)}{N_{\rm eff}}}.
\]
\end{lemma}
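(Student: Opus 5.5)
The plan is the standard two-endpoint argument for lower quantiles. Set $t_-:=z_q-r_N$ and $t_+:=z_q+r_N$. By the definition of the lower empirical quantile, $\widehat z_{N,q}\le t_+$ holds as soon as $\widehat F_N(t_+)\ge q$. Similarly, $\widehat z_{N,q}>t_-$ holds as soon as $\widehat F_N(t_-)<q$. Indeed, $\widehat F_N$ is nondecreasing, so $\widehat F_N(z)<q$ for all $z\le t_-$, and the infimum cannot lie at or below $t_-$. Hence it suffices to show that both endpoint events hold simultaneously with probability at least $1-\zeta$.

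First, use Assumption~\ref{ass:quantile-margin} with $r=r_N\le r_q$. This gives the deterministic gaps
\[
F_Z(t_+)\ge q+c_qr_N=q+2u_N,
\qquad
F_Z(t_-)\le q-c_qr_N=q-2u_N.
\]
Next, apply Assumption~\ref{ass:markov-pilot-endpoint-concentration} at the two fixed thresholds $t_\pm$ with deviation level $u_N$. Each failure probability is at most $2\exp\{-N_{\rm eff}u_N^2\}=2\cdot\zeta/4=\zeta/2$. A union bound then shows that, with probability at least $1-\zeta$, both $|\widehat F_N(t_\pm)-F_Z(t_\pm)|\le u_N$ hold. On this event,
\[
\widehat F_N(t_+)\ge q+2u_N-u_N=q+u_N\ge q,
\qquad
\widehat F_N(t_-)\le q-2u_N+u_N=q-u_N<q,
\]
where the strict inequality uses $u_N>0$ (as $\zeta<4$). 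The first paragraph then gives $\widehat z_{N,q}\in(t_-,t_+]\subset[z_q-r_N,z_q+r_N]$, which is \eqref{eq:empirical-quantile-error}. The displayed equivalent form is just $r_N=2u_N/c_q$.

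The only delicate step is the left endpoint. A strict inequality $\widehat F_N(t_-)<q$ is needed, together with monotonicity of $\widehat F_N$, to exclude every $z\le t_-$ from $\{\widehat F_N\ge q\}$. This is why the margin is taken as $2u_N$ rather than $u_N$: the leftover slack $u_N>0$ supplies the strict inequality. Also, the thresholds $t_\pm$ depend only on $z_q,c_q,\zeta,N_{\rm eff}$, not on the sample, so the fixed-threshold concentration assumption applies legitimately. Finally, the infimum definition used in the lemma agrees with the order-statistic definition $z_{(\lceil Nq\rceil)}$, so no further reconciliation is needed.
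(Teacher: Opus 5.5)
Your proposal is correct and follows essentially the same route as the paper: the margin assumption gives the $2u_N$ gaps at the deterministic endpoints $z_q\pm r_N$. Concentration at those two points, using $2\exp\{-N_{\rm eff}u_N^2\}=\zeta/2$ and a union bound, gives the $1-\zeta$ event, and the definition of the lower empirical quantile finishes the argument. Your half-open conclusion $\widehat z_{N,q}\in(t_-,t_+]$ implicitly uses that the infimum is attained, by right-continuity of $\widehat F_N$. That is true but unnecessary, since the closed-interval claim only needs $\widehat z_{N,q}\ge t_-$, which already follows from $\widehat F_N(t_-)<q$ and monotonicity.
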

\begin{proof}
By Assumption~\ref{ass:quantile-margin} and the definition of \(r_N\),
\[
  F_Z(z_q-r_N)
  \le
  q-c_qr_N
  =
  q-2u_N,
\]
and
\[
  F_Z(z_q+r_N)
  \ge
  q+c_qr_N
  =
  q+2u_N .
\]
Apply Assumption~\ref{ass:markov-pilot-endpoint-concentration} to the two fixed
thresholds \(z_q-r_N\) and \(z_q+r_N\).  By a union bound, with probability at
least \(1-\zeta\), both inequalities
\[
  |\widehat F_N(z_q-r_N)-F_Z(z_q-r_N)|\le u_N,
  \qquad
  |\widehat F_N(z_q+r_N)-F_Z(z_q+r_N)|\le u_N
\]
hold.  On this event,
\[
  \widehat F_N(z_q-r_N)
  \le
  F_Z(z_q-r_N)+u_N
  \le
  q-u_N
  <q,
\]
and
\[
  \widehat F_N(z_q+r_N)
  \ge
  F_Z(z_q+r_N)-u_N
  \ge
  q+u_N
  >q.
\]

By the definition of the lower empirical quantile,
\[
  \widehat z_{N,q}
  =
  \inf\{z\in\mathbb R:\widehat F_N(z)\ge q\}.
\]
Since \(\widehat F_N(z_q-r_N)<q\), the empirical CDF has not reached level
\(q\) at \(z_q-r_N\).  Hence the first point at which it reaches level \(q\)
cannot lie to the left of \(z_q-r_N\), and therefore
\[
  \widehat z_{N,q}\ge z_q-r_N.
\]
Since \(\widehat F_N(z_q+r_N)>q\), the empirical CDF has reached level \(q\) by
\(z_q+r_N\).  Hence the first point at which it reaches level \(q\) cannot lie
to the right of \(z_q+r_N\), and therefore
\[
  \widehat z_{N,q}\le z_q+r_N.
\]
Combining the two bounds gives
\[
  \widehat z_{N,q}\in[z_q-r_N,z_q+r_N],
\]
as claimed.
\end{proof}

\begin{remark}[Data reuse in the pilot run]
\label{rem:markov-pilot-quantile-input}
Assumption~\ref{ass:markov-pilot-endpoint-concentration} is stated for a fixed
fitted score \(\widehat G\).  This is immediate under sample splitting.  If
the same pilot run is used for fitting and threshold estimation, then the
indicators also depend on the data through \(\widehat G\).  This does not
change Lemma~\ref{lem:pilot-quantile-stability}, but it requires the
corresponding endpoint concentration input for the data-dependent fitted score.
\end{remark}

\subsection{Proxy-quantile envelope}
\label{subsec:proxy-quantile-envelope}

The threshold analysis above supplies the empirical proxy thresholds used in the
localized envelope.  If \(G_\star\) were used directly, the corresponding
reference envelope would be
\begin{equation*}
  A_{q_R,q_S}^\star(x)
  =(G_\star(x)-R_{q_R}^\star)_+^\theta+(G_\star(x)-S_{q_S}^\star)_+ .
\end{equation*}
The proxy-quantile envelope applies the same localized-envelope form to
\(\widehat G\) and its empirical thresholds.

The fixed proxy-quantile convention used below is
\begin{equation*}
  q_R=0.70,
  \qquad
  q_S=0.99,
  \qquad
  \theta=\frac12.
\end{equation*}
This choice places most pilot states in the safe region and uses the upper
pilot quantile to mark the onset of the tail region.  The exponent
\(\theta=1/2\) is the soft parameter in the first positive-part term: it makes
the localized-envelope response sublinear after the safe--intermediate boundary
rather than hard immediately at that boundary.  Substituting
\(\bar A=\widehat G\), \(\widehat R_{q_R}\), \(\widehat S_{q_S}\), and
\(c_s=c_h=1\) into the localized family \eqref{eq:localized-envelope} gives the proxy-quantile localized envelope
\begin{equation*}
 A_{\rm loc}(x):= A_{\widehat G,\widehat R_{q_R},\widehat S_{q_S},\theta}(x)
  =
  (\widehat G(x)-\widehat R_{q_R})_+^\theta
  +
  (\widehat G(x)-\widehat S_{q_S})_+,
  \qquad 0<\theta\le1.
\end{equation*}
The threshold \(\widehat R_{q_R}\) marks the boundary between the safe and
intermediate regions, while \(\widehat S_{q_S}\) marks the onset of the tail
region.  The first term gives the soft activation of the localized envelope
after \(\widehat R_{q_R}\), and the second term adds the tail-region
contribution after \(\widehat S_{q_S}\).  The exponent \(\theta\) is only the
soft parameter in the first term; it does not change the two pilot thresholds.

Algorithm~\ref{alg:proxy-quantile-calibration} summarizes the proxy-quantile construction.  It fixes the proxy score and its quantile thresholds before any later calibration step.

\begin{algorithm}[H]
\caption{Proxy--quantile envelope construction}
\label{alg:proxy-quantile-calibration}
\KwInput{Pilot states \(X_1,\ldots,X_N\); quantile levels \(q_R<q_S\); floor \(\tau>0\); soft exponent \(\theta\) (default \(1/2\)); feature map \(\phi:\R^d\to\R^p\).}
\KwOutput{Proxy score \(\widehat G\), thresholds \(\widehat R,\widehat S\), and local deterministic envelope \(A_{\rm loc}\).}
\For{\(i=1,\ldots,N\)}{
  Compute or estimate the growth-score label
  \(Y_i\approx G_\star(X_i)=\|b(X_i)\|/(1+\|X_i\|)\)\;}
Construct a log-scale switching score by least squares:
\(\widehat\omega\in\arg\min_\omega\sum_{i=1}^N(\omega^\top\phi(X_i)-\log(Y_i+\tau))^2\)\;
Define \(\widehat G(x)=[\exp(\widehat\omega^\top\phi(x))-\tau]_+\)\;
Compute pilot proxy scores \(\widehat G_i=\widehat G(X_i)\)\;
Set \(\widehat R\leftarrow\widehat{\mathsf Q}_{N,q_R}(\widehat G_1,\ldots,\widehat G_N)\) and
\(\widehat S\leftarrow\widehat{\mathsf Q}_{N,q_S}(\widehat G_1,\ldots,\widehat G_N)\)\;
Set \(A_{\rm loc}(x)\leftarrow(\widehat G(x)-\widehat R)_+^\theta+(\widehat G(x)-\widehat S)_+\)\;
\KwReturn{\(\widehat G,\widehat R,\widehat S,A_{\rm loc}\)}\;
\end{algorithm}

\subsection{Final denominator with tail-floor correction}
\label{sec:tail-floor-correction}

The proxy-quantile construction gives the localized envelope
\begin{equation*}
  A_{\rm loc}(x)
  =
  (\widehat G(x)-\widehat R_{q_R})_+^\theta
  +
  (\widehat G(x)-\widehat S_{q_S})_+,
  \qquad 0<\theta\le1,
\end{equation*}
and the local denominator
\begin{equation}
  D_{\eta,{\rm loc}}(x)=1+\eta^\alpha A_{\rm loc}(x).
  \label{eq:local-proxy-denominator}
\end{equation}
This is the denominator reported under the label \emph{proxy-quantile} in the
main experiments.  Thus the main numerical results test the local proxy design
itself.  The stable trajectories observed there show that
\(D_{\eta,{\rm loc}}\) is sufficient on the regions visited by the chains.  They
do not, by themselves, prove a global Lyapunov condition on all of \(\R^d\).

For a global stability statement, the deterministic-envelope Lyapunov result
of \cite{companion} requires the scaled drift to have at most linear growth.  In
the present notation, the required input is
\begin{equation}
  \frac{\norm{b(x)}}{D_{\eta,{\rm final}}(x)}
  \le
  C_\eta(1+\norm{x}),
  \qquad x\in\R^d .
  \label{eq:final-effective-linearity}
\end{equation}
The tail floor enforces this input through a fixed norm-polynomial tail
indicator.  It does not use a fresh full-gradient evaluation during production,
and it does not require the fitted proxy score to extrapolate correctly in the
unobserved far tail.

Fix a polynomial exponent \(\kappa_{\rm tail}>0\) and define
\begin{equation}
  P_{\rm tail}(x)=1+\norm{x}^{\kappa_{\rm tail}} .
  \label{eq:polynomial-tail-indicator}
\end{equation}
In the tail diagnostics we use \(\kappa_{\rm tail}=2\), so
\(P_{\rm tail}(x)=1+\norm{x}^2\).  More generally, under a polynomial drift
bound \(\norm{b(x)}\le C_b(1+\norm{x}^m)\), a choice
\(\kappa_{\rm tail}\ge m-1\) gives a polynomial upper control of the growth
score \(G_\star(x)=\norm{b(x)}/(1+\norm{x})\).

Fix two pilot quantile levels
\[
  q_{\rm lin}<q_{\rm tail}<1 .
\]
Let
\begin{equation}
  C_{\rm lin}
  :=
  \rho_{\rm lin}\,
  \widehat{\mathsf Q}_{N,q_{\rm lin}}(P_{\rm tail}),
  \qquad
  T_{\rm tail}
  :=
  \widehat{\mathsf Q}_{N,q_{\rm tail}}(P_{\rm tail}),
  \label{eq:tail-threshold-and-linear-constant}
\end{equation}
where \(\rho_{\rm lin}>0\) is a fixed multiplier and the empirical quantiles are
computed from the pilot values \(P_{\rm tail}(X_1),\ldots,P_{\rm tail}(X_N)\).
In the tail diagnostics we use \(q_{\rm lin}=0.95\),
\(q_{\rm tail}=0.995\), and \(\rho_{\rm lin}=1\).  Here \(T_{\rm tail}\) is the
far-tail activation threshold, and \(C_{\rm lin}\) is the polynomial-tail level
used after activation.

The final denominator is obtained by flooring the local denominator in the far
tail:
\begin{equation}
  D_{\eta,{\rm final}}(x)
  =
  \begin{cases}
    D_{\eta,{\rm loc}}(x),
      & P_{\rm tail}(x)\le T_{\rm tail},\\[2mm]
    \max\left\{D_{\eta,{\rm loc}}(x),\dfrac{P_{\rm tail}(x)}{C_{\rm lin}}\right\},
      & P_{\rm tail}(x)>T_{\rm tail}.
  \end{cases}
  \label{eq:tail-final-denominator}
\end{equation}
Equivalently, define the tail floor
\begin{equation}
  D_{\eta,{\rm tail}}(x)
  :=
  \begin{cases}
    1,
      & P_{\rm tail}(x)\le T_{\rm tail},\\[2mm]
    \dfrac{P_{\rm tail}(x)}{C_{\rm lin}},
      & P_{\rm tail}(x)>T_{\rm tail},
  \end{cases}
  \label{eq:tail-denominator}
\end{equation}
so that
\[
  D_{\eta,{\rm final}}
  =
  \max\{D_{\eta,{\rm loc}},D_{\eta,{\rm tail}}\}.
\]
Below \(T_{\rm tail}\), the tail floor is inactive.  Above \(T_{\rm tail}\), it
raises the denominator when the local denominator is smaller than
\(P_{\rm tail}/C_{\rm lin}\).

To use this polynomial floor for a global stability statement, we require that
\(P_{\rm tail}\) upper controls the growth score at the scale relevant for the
Lyapunov argument: there is a constant \(C_{\rm tail}\) such that
\begin{equation}
  G_\star(x)
  \le
  C_{\rm tail} P_{\rm tail}(x),
  \qquad x\in\R^d .
  \label{eq:polynomial-tail-domination}
\end{equation}
This condition is not needed for running the local proxy-quantile denominator.
It is the extra input used to turn the polynomial tail floor into a global
Lyapunov bound.  Unlike proxy-score extrapolation in the far tail, it follows directly from a
polynomial drift-growth bound once \(\kappa_{\rm tail}\) is chosen large enough.

Under \eqref{eq:polynomial-tail-domination}, the final denominator satisfies
\eqref{eq:final-effective-linearity}.  If \(P_{\rm tail}(x)\le T_{\rm tail}\),
then \(D_{\eta,{\rm final}}(x)\ge1\), and hence
\[
  \frac{\norm{b(x)}}{D_{\eta,{\rm final}}(x)}
  \le
  \norm{b(x)}
  =
  G_\star(x)(1+\norm{x})
  \le
  C_{\rm tail}T_{\rm tail}(1+\norm{x}).
\]
If \(P_{\rm tail}(x)>T_{\rm tail}\), then
\(D_{\eta,{\rm final}}(x)\ge P_{\rm tail}(x)/C_{\rm lin}\), so
\[
  \frac{\norm{b(x)}}{D_{\eta,{\rm final}}(x)}
  \le
  C_{\rm tail}C_{\rm lin}(1+\norm{x}).
\]
Therefore \eqref{eq:final-effective-linearity} holds with
\[
  C_\eta=C_{\rm tail}\max\{T_{\rm tail},C_{\rm lin}\}.
\]
In this sense, \(D_{\eta,{\rm final}}\) is the local proxy denominator equipped
with a cheap norm-polynomial far-tail floor that enforces the scaled-drift
growth input under \eqref{eq:polynomial-tail-domination}.

A useful diagnostic is the tail-activation fraction
\begin{equation}
  \frac1T\sum_{k=1}^T
  {\bf 1}\{D_{\eta,{\rm tail}}(X_k)>D_{\eta,{\rm loc}}(X_k)\}.
  \label{eq:tail-activation-diagnostic}
\end{equation}
When this fraction is small, the averages are driven by the local proxy-quantile
denominator, and the tail floor acts as a far-tail correction.  When it is large,
the tail-floor correction is part of the tested denominator and should be
reported as such.  Section~\ref{sec:exp4-tail-floor} checks this point by
measuring both the activation frequency and the conditional reduction in the
effective growth score on states where the tail floor is active.

\begin{proposition}[Stability input for the tail-corrected denominator]
\label{prop:tail-corrected-stability}
Assume the standing drift assumptions and the polynomial-tail domination
condition \eqref{eq:polynomial-tail-domination}.  After the pilot fit and
threshold calibration, define \(D_{\eta,{\rm loc}}\), \(D_{\eta,{\rm tail}}\),
and \(D_{\eta,{\rm final}}\) by \eqref{eq:local-proxy-denominator},
\eqref{eq:tail-denominator}, and \eqref{eq:tail-final-denominator}.  Then
\(D_{\eta,{\rm final}}\) is state-deterministic and satisfies the global
effective-linearity input \eqref{eq:final-effective-linearity}.  Assume moreover
that \(D_{\eta,{\rm loc}}\) has the polynomial upper growth needed in the
deterministic-envelope Lyapunov estimate of
\cite[Proposition~4.6]{companion}.  Then the exact-gradient kernel with
denominator \(D_{\eta,{\rm final}}\) satisfies the Lyapunov drift estimate of
\cite[Proposition~4.6]{companion}, admits invariant measures, and has the
uniform moment bounds of \cite[Theorem~4.4]{companion}.

If the stochastic-gradient version is used, the same conclusion holds under the
oracle moment and envelope-compatible increment assumptions in
\cite[Assumption~7.1]{companion}, by the stochastic-gradient stability result
\cite[Proposition~7.2]{companion}.
\end{proposition}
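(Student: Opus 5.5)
The plan is to reduce the statement to the companion results by checking their inputs one at a time, in three steps.

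\emph{Step 1: determinism.} The first step is to show that \(D_{\eta,{\rm final}}\) is state-deterministic. After the pilot fit and threshold calibration, the quantities \(\widehat\omega\), \(\widehat R_{q_R}\), \(\widehat S_{q_S}\), \(C_{\rm lin}\) and \(T_{\rm tail}\) are frozen. Hence \(D_{\eta,{\rm loc}}\), \(D_{\eta,{\rm tail}}\) and their maximum are measurable functions of \(x\) alone. They are fixed before \(U_k\) is drawn, so the conditional-mean identity of Section~\ref{sec:setup} applies. Positivity also holds: \(D_{\eta,{\rm final}}\ge D_{\eta,{\rm loc}}\ge 1\).

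\emph{Step 2: effective linearity.} Next I verify \eqref{eq:final-effective-linearity} by the two-case argument already displayed before the statement. When \(P_{\rm tail}(x)\le T_{\rm tail}\), I use \(D_{\eta,{\rm final}}\ge1\) together with \(\norm{b(x)}=G_\star(x)(1+\norm{x})\) and \eqref{eq:polynomial-tail-domination}. When \(P_{\rm tail}(x)>T_{\rm tail}\), I use \(D_{\eta,{\rm final}}\ge P_{\rm tail}/C_{\rm lin}\). Together these give \(C_\eta=C_{\rm tail}\max\{T_{\rm tail},C_{\rm lin}\}\). I also need \(C_{\rm lin}>0\), which holds because \(P_{\rm tail}\ge1\) and \(\rho_{\rm lin}>0\).

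\emph{Step 3: feeding the companion Lyapunov estimate.} I will phrase the final denominator as \(1+\eta^\alpha A_{\rm final}\) with
\[
A_{\rm final}:=\eta^{-\alpha}(D_{\eta,{\rm final}}-1)\ge0 .
\]
\cite[Proposition~4.6]{companion} needs two ingredients.
\begin{enumerate}
\item The scaled drift \(b/D_{\eta,{\rm final}}\) must grow at most linearly. This is Step~2; equivalently, it is the role played by \eqref{eq:companion-growth-indicator-input} with a hard tail component.
\item The envelope must have polynomial upper growth.
\end{enumerate}
For the second ingredient, I would bound \(D_{\eta,{\rm final}}\le D_{\eta,{\rm loc}}+D_{\eta,{\rm tail}}\). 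The first term is polynomially bounded by the added hypothesis, and \(D_{\eta,{\rm tail}}\le 1+P_{\rm tail}/C_{\rm lin}\) is polynomial in \(\norm{x}\).

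With these inputs, the dissipativity \(\ip{x}{b(x)}\le b_0-m\norm{x}^{r+2}\) gives the Lyapunov drift inequality for \(V(x)=1+\norm{x}^2\) or the companion's chosen \(V\). The mechanism is this: after division, the drift term \(2\eta\ip{x}{b}/D\) remains negative at large \(\norm{x}\), while the quadratic term \(\eta^2\norm{b}^2/D^2\le \eta^2C_\eta^2(1+\norm{x})^2\) is controlled by the linear bound. Existence of invariant measures then follows by Krylov--Bogoliubov, via tightness from the Lyapunov bound together with the Feller property of the Gaussian-perturbed kernel. The uniform moment bounds follow from \cite[Theorem~4.4]{companion}. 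For the stochastic-gradient kernel I would invoke \cite[Proposition~7.2]{companion} under \cite[Assumption~7.1]{companion}. The extra noise term \(\eta\zeta_m/D\) has conditional mean zero by determinism, and its second moment is controlled because \(D\ge1\).

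\emph{Main obstacle.} The delicate step is matching \(D_{\eta,{\rm final}}\) to the exact hypothesis format of \cite[Proposition~4.6]{companion}. That result is stated for localized envelopes of the form \eqref{eq:localized-envelope}, with a hard component and \(S_\eta\le C_S\eta^{-\alpha}\), whereas here the taming comes from a max with a polynomial floor. The negative dissipative term is only shrunk by \(1/D\), and one must check that it still dominates the constant-order terms. I would first try to show that the companion proof uses only two facts: effective linearity, and \(D\) being at most polynomial, so that \(\ip{x}{b}/D\le (b_0-m\norm{x}^{r+2})/D\) remains \(\le C-c\norm{x}^2\)-type. This needs a lower bound of the form \(\norm{x}^{r+2}/D\gtrsim \norm{x}^2\) in the tail, which I would derive from \(D\lesssim 1+\norm{x}^{r}\)-scale growth of \(G_\star\), plus the floor \(P_{\rm tail}\) with \(\kappa_{\rm tail}\) not exceeding \(r\). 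If this fails, I would weaken the conclusion to requiring \(\kappa_{\rm tail}\) compatible with \(r\).
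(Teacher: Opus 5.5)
Your Steps 1--3 follow the paper's proof. You freeze $\widehat\omega$, $\widehat R_{q_R}$, $\widehat S_{q_S}$, $T_{\rm tail}$ and $C_{\rm lin}$. You run the two-case bound to get $C_\eta=C_{\rm tail}\max\{T_{\rm tail},C_{\rm lin}\}$. You note that the maximum of $D_{\eta,{\rm loc}}$ and a fixed polynomial floor is polynomially bounded. Then you cite the companion's Proposition~4.6, Theorem~4.4 and Proposition~7.2.

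\textbf{The Krylov--Bogoliubov aside fails as written.} The paper takes invariant measures directly from Theorem~4.4, but your own route needs the Feller property, and this kernel is generally not Feller. $D_{\eta,{\rm tail}}$ jumps from $1$ to $T_{\rm tail}/C_{\rm lin}$ across $\{P_{\rm tail}=T_{\rm tail}\}$. This ratio exceeds $1$ in general, since $q_{\rm lin}<q_{\rm tail}$ and $\rho_{\rm lin}=1$. Wherever $D_{\eta,{\rm loc}}<T_{\rm tail}/C_{\rm lin}$ on that surface, the mean map $x\mapsto x+\eta b(x)/D_{\eta,{\rm final}}(x)$ is discontinuous. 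This is typical, because $D_{\eta,{\rm loc}}=1+\eta^\alpha A_{\rm loc}$ is close to $1$ there. For a self-contained argument, use minorization instead. The mean map is locally bounded, so the Gaussian transition density is bounded below on compacts and compact sets are small. The Lyapunov drift then gives positive Harris recurrence via Meyn--Tweedie, with no continuity needed.

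\textbf{Your ``main obstacle'' is a real issue, and the paper's proof passes over it.} The paper simply asserts that the maximum with $P_{\rm tail}/C_{\rm lin}$ ``preserves the required polynomial upper growth,'' but the assumption is imposed only on $D_{\eta,{\rm loc}}$, not on $\kappa_{\rm tail}$.

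You can make the concern precise. Dissipativity and Cauchy--Schwarz give $\norm{b(x)}\ge m\norm{x}^{r+1}-b_0\norm{x}^{-1}$. Hence $G_\star\gtrsim\norm{x}^r$, and \eqref{eq:polynomial-tail-domination} forces $\kappa_{\rm tail}\ge r$.

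Conversely, $D_{\eta,{\rm final}}\ge P_{\rm tail}/C_{\rm lin}$ in the far tail, so $\eta\norm{b}/D_{\eta,{\rm final}}=O(\norm{x}^{r+1-\kappa_{\rm tail}})$. If $\kappa_{\rm tail}>r+2$, then $\E[\norm{X_{k+1}}^2-\norm{x}^2\mid X_k=x]\to 2\eta d/\beta>0$. This rules out any geometric drift for $1+\norm{x}^2$, and by Lamperti-type criteria the chain is not even positive recurrent. So a compatibility condition on $\kappa_{\rm tail}$ is genuinely needed for the stated conclusion; it is not a weakening on your part.

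With $\kappa_{\rm tail}=r$ and $D_{\eta,{\rm loc}}\lesssim 1+\norm{x}^r$, your computation works. The quartic experiments use $\kappa_{\rm tail}=2=r$. You get $\norm{x}^{r+2}/D_{\eta,{\rm final}}\gtrsim\norm{x}^2$, which dominates $\eta^2C_\eta^2(1+\norm{x})^2$ for small $\eta$.

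\textbf{One correction.} The growth of $D_{\eta,{\rm loc}}$ is governed by $\widehat G$, not by $G_\star$. With the $r(x)^2$ feature, $\widehat G$ is super-polynomial whenever the fitted coefficient of $r(x)^2$ is positive. That is why this growth must be assumed rather than inherited from $G_\star$.
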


\begin{proof}
After calibration, the fitted score \(\widehat G\), the local thresholds
\(\widehat R_{q_R}\), \(\widehat S_{q_S}\), the polynomial-tail threshold
\(T_{\rm tail}\), and the level \(C_{\rm lin}\) are fixed.  Hence
\(D_{\eta,{\rm loc}}\) and \(D_{\eta,{\rm tail}}\) are deterministic functions of
the current state.  Their pointwise maximum \(D_{\eta,{\rm final}}\) is also
state-deterministic.

The bound \eqref{eq:final-effective-linearity} gives the effective-linearity
input required in \cite[Proposition~4.6]{companion}.  It was proved above from
the polynomial-tail domination condition.  The other denominator input in that
proposition is polynomial upper growth.  This holds for the local part by
assumption.  The tail floor is controlled by \(P_{\rm tail}\), a fixed
polynomial in \(\norm{x}\).  Taking a pointwise maximum preserves positivity and
the required polynomial upper growth.

Therefore the exact-gradient deterministic-envelope Lyapunov estimate
\cite[Proposition~4.6]{companion} applies to the kernel with denominator
\(D_{\eta,{\rm final}}\).  The invariant-measure existence and the uniform
moment bounds then follow from \cite[Theorem~4.4]{companion}.

For the stochastic-gradient kernel, the denominator remains fixed conditional on
the state after calibration.  Under the oracle moment and compatibility
conditions in \cite[Assumption~7.1]{companion}, the stochastic-gradient
extension \cite[Proposition~7.2]{companion} gives the same Lyapunov drift
estimate.  The invariant-measure and moment conclusions follow in the same way.
\end{proof}

\subsection{Design summary}
\label{sec:design-summary}

The construction has two parts.  The main proxy-quantile design is
\begin{equation*}
  G_\star
  \quad\Longrightarrow\quad
  \log(G_\star+\tau)
  \quad\Longrightarrow\quad
  \widehat G
  \quad\Longrightarrow\quad
  A_{\rm loc}
  =
  A_{\widehat G,\widehat R_{q_R},\widehat S_{q_S},\theta}
  \quad\Longrightarrow\quad
  D_{\eta,{\rm loc}} .
\end{equation*}
Here \(G_\star\) is the growth score used by the localized envelope family.  The
log transformation and pilot fit produce a cone-comparable proxy score
\(\widehat G\).  The pilot quantiles \(\widehat R_{q_R}\) and
\(\widehat S_{q_S}\) then turn this proxy into a local deterministic denominator
\(D_{\eta,{\rm loc}}\).  This is the denominator tested under the
\emph{proxy-quantile} label in the main experiments.

The second part is the norm-polynomial tail-floor correction:
\begin{equation*}
  P_{\rm tail}(x)=1+\norm{x}^{\kappa_{\rm tail}}
  \quad\Longrightarrow\quad
  D_{\eta,{\rm final}}
  =
  \max\{D_{\eta,{\rm loc}},D_{\eta,{\rm tail}}\}.
\end{equation*}
It supplies the global Lyapunov input on all of \(\R^d\) without fresh
full-gradient evaluations in production.  In the reported diagnostics, the
tail-active fraction is small.  Thus the observed averages are driven mainly by
\(D_{\eta,{\rm loc}}\).

\begin{table}[H]
\centering
\caption{Symbols and meanings in the denominator design.}
\label{tab:design-roles}
\small
\begin{tabular}{lll}
\toprule
Object & Role & Meaning in the design \\
\midrule
\(\widehat G\) & proxy score & cheap switching score on the \(G_\star\)-scale \\
\(\widehat R_{q_R}\) & lower pilot quantile & boundary of the safe region \\
\(\widehat S_{q_S}\) & upper pilot quantile & onset of the local tail region \\
\(D_{\eta,{\rm loc}}\) & local denominator & denominator generated by the proxy-quantile envelope \\
\(P_{\rm tail}\) & tail indicator & cheap norm-polynomial score used by the tail floor \\
\(D_{\eta,{\rm tail}}\) & tail floor & controls the drift in the far tail \\
\(D_{\eta,{\rm final}}\) & final denominator & maximum of the local denominator and the tail floor \\
\bottomrule
\end{tabular}
\end{table}

\section{Envelope errors and stationary observable errors}
\label{sec:transition-stationary-errors}
\label{sec:transition-stationary}
Section~\ref{sec:tail-floor-correction} places the final proxy denominator in the stability setting required by the companion deterministic-envelope theory.  Once this is done, the Lyapunov and invariant-measure inputs are available.  It remains to control the error caused by replacing the \(G_\star\)-envelope with the proxy envelope.  This section shows how score, threshold, and response errors change the retention factor and then enter the stationary observable error bound.

\subsection{From level-set and threshold errors to envelope errors}
The next lemma converts score and threshold errors into localized-envelope
errors.

\begin{lemma}[Localized-envelope stability]
\label{lem:g-localized-envelope-stability}
Let
\begin{equation*}
  A_{G,R,S}(x)=(G(x)-R)_+^\theta+(G(x)-S)_+,
  \qquad 0<\theta\le1.
\end{equation*}
Then for any nonnegative scores \(G,\widetilde G\) and thresholds
\(R,S,\widetilde R,\widetilde S\),
\begin{equation}
\begin{aligned}
  |A_{\widetilde G,\widetilde R,\widetilde S}(x)-A_{G,R,S}(x)|
  &\le |\widetilde G(x)-G(x)|^\theta
       +|\widetilde R-R|^\theta \\
  &\quad +|\widetilde G(x)-G(x)|+|\widetilde S-S|.
\end{aligned}
\label{eq:g-localized-envelope-stability}
\end{equation}
\end{lemma}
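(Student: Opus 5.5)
The plan is to split the envelope difference into its soft and hard components and treat each with an elementary one-dimensional inequality. By the triangle inequality,
\[
  |A_{\widetilde G,\widetilde R,\widetilde S}(x)-A_{G,R,S}(x)|
  \le
  \bigl|(\widetilde G(x)-\widetilde R)_+^\theta-(G(x)-R)_+^\theta\bigr|
  +\bigl|(\widetilde G(x)-\widetilde S)_+-(G(x)-S)_+\bigr| .
\]
Each of the two terms is then bounded separately, and the resulting bounds are added.

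For the hard term, I use that \(t\mapsto t_+\) is \(1\)-Lipschitz on \(\R\). This gives
\[
  |(\widetilde G-\widetilde S)_+-(G-S)_+|
  \le |(\widetilde G-G)-(\widetilde S-S)|
  \le |\widetilde G-G|+|\widetilde S-S| .
\]
This produces the last two terms of \eqref{eq:g-localized-envelope-stability}.

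For the soft term, I use two facts.

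\begin{enumerate}
  \item The map \(t\mapsto t^\theta\) on \([0,\infty)\) is \(\theta\)-Hölder with constant one for \(0<\theta\le1\):
  \[
    |a^\theta-b^\theta|\le|a-b|^\theta,\qquad a,b\ge0 .
  \]
  To prove it, assume \(a\ge b\). Subadditivity of the concave function \(t^\theta\), which vanishes at \(0\), gives \(a^\theta\le (a-b)^\theta+b^\theta\).
  \item Combining this with the Lipschitz property of \(t_+\) gives
  \[
    \bigl|(\widetilde G-\widetilde R)_+^\theta-(G-R)_+^\theta\bigr|
    \le |(\widetilde G-G)-(\widetilde R-R)|^\theta .
  \]
\end{enumerate}

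Applying subadditivity of \(t^\theta\) once more, now with the triangle inequality and monotonicity of \(t^\theta\), yields
\[
  |(\widetilde G-G)-(\widetilde R-R)|^\theta
  \le \bigl(|\widetilde G-G|+|\widetilde R-R|\bigr)^\theta
  \le |\widetilde G-G|^\theta+|\widetilde R-R|^\theta .
\]
This produces the first two terms of \eqref{eq:g-localized-envelope-stability}, and summing the two component bounds completes the argument.

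The only step that needs care is the Hölder inequality \(|a^\theta-b^\theta|\le|a-b|^\theta\), that is, subadditivity of \(t^\theta\). I would verify it by showing that \(t\mapsto (t+s)^\theta - t^\theta\) is nonincreasing in \(t\ge0\). This follows because the derivative \(\theta(t+s)^{\theta-1}-\theta t^{\theta-1}\) is nonpositive, since \(\theta-1\le0\). Evaluating at \(t=0\) then gives \((t+s)^\theta\le t^\theta+s^\theta\). The case \(\theta=1\) is trivial. Everything else is routine, and no results from earlier in the paper are needed beyond the definitions.
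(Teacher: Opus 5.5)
Your proposal is correct and follows essentially the same route as the paper: split into the soft and hard components, bound the hard term via the $1$-Lipschitz property of $u\mapsto u_+$, and bound the soft term via the $\theta$-H\"older estimate $|(a_+)^\theta-(b_+)^\theta|\le|a-b|^\theta$ followed by subadditivity $(u+v)^\theta\le u^\theta+v^\theta$. Your explicit derivation of the H\"older inequality from subadditivity fills in a step the paper only asserts.
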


\begin{proof}
For \(0<\theta\le1\), the map \(u\mapsto (u_+)^\theta\) is H\"older of
order \(\theta\):
\begin{equation*}
  |(a_+)^\theta-(b_+)^\theta|\le |a-b|^\theta .
\end{equation*}
This global H\"older form is used because it remains valid at activation
boundaries, where the soft term need not be Lipschitz when \(0<\theta<1\).
Apply it with
\[
  a=\widetilde G(x)-\widetilde R,
  \qquad
  b=G(x)-R.
\]
Since
\begin{equation*}
  |(\widetilde G-\widetilde R)-(G-R)|
  \le |\widetilde G-G|+|\widetilde R-R|,
\end{equation*}
and \((u+v)^\theta\le u^\theta+v^\theta\) for \(u,v\ge0\), the soft
term is bounded by the first two terms on the right-hand side of
\eqref{eq:g-localized-envelope-stability}.  The upper activation term follows
from the Lipschitz property of \(u\mapsto u_+\):
\begin{equation*}
  |(\widetilde G-\widetilde S)_+-(G-S)_+|
  \le |\widetilde G-G|+|\widetilde S-S|.
\end{equation*}
Combining the two estimates proves the result.
\end{proof}

The lemma becomes useful once the score and threshold errors are supplied by
the cone and quantile comparisons.  The following corollary records the resulting
localized-envelope error on a pilot region.

\begin{corollary}[Envelope error on a pilot region]
\label{cor:g-envelope-error-pilot-region}
Let \(\Omega_M\) be a pilot region on which \(G_\star(x)\le M\).  Suppose that
\[
  \widehat G(x)\asymp_{\delta,\tau}G_\star(x),
  \qquad x\in\Omega_M .
\]
Let \(R_{q_R}^\star\) and \(S_{q_S}^\star\) be the growth-score thresholds, and
let \(\widehat R_{q_R}\) and \(\widehat S_{q_S}\) be the fitted pilot thresholds
used by the proxy envelope.  Define
\[
  A_\star(x)=A_{G_\star,R_{q_R}^\star,S_{q_S}^\star}(x),
  \qquad
  \widehat A(x)=A_{\widehat G,\widehat R_{q_R},\widehat S_{q_S}}(x).
\]
Then, on \(\Omega_M\),
\begin{equation}
\begin{aligned}
  |\widehat A(x)-A_\star(x)|
  &\le
  \bigl[(e^\delta-1)(M+\tau)\bigr]^\theta
  + |\widehat R_{q_R}-R_{q_R}^\star|^\theta  \\
  &\quad
  +(e^\delta-1)(M+\tau)
  + |\widehat S_{q_S}-S_{q_S}^\star|.
\end{aligned}
\label{eq:g-envelope-error-pilot-region}
\end{equation}
Consequently, the population quantile sandwich
\eqref{eq:quantile-sandwich} and the empirical quantile error
\eqref{eq:empirical-quantile-error} give an explicit localized-envelope error
bound on \(\Omega_M\).
\end{corollary}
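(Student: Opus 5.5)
The plan is to apply Lemma~\ref{lem:g-localized-envelope-stability} with \(\widetilde G=\widehat G\), \(G=G_\star\), \(\widetilde R=\widehat R_{q_R}\), \(R=R_{q_R}^\star\), \(\widetilde S=\widehat S_{q_S}\) and \(S=S_{q_S}^\star\). This gives, pointwise on \(\Omega_M\),
\[
  |\widehat A(x)-A_\star(x)|
  \le |\widehat G(x)-G_\star(x)|^\theta+|\widehat R_{q_R}-R_{q_R}^\star|^\theta
  +|\widehat G(x)-G_\star(x)|+|\widehat S_{q_S}-S_{q_S}^\star| .
\]
Since the threshold terms already appear verbatim in \eqref{eq:g-envelope-error-pilot-region}, the only remaining task is the uniform score bound \(|\widehat G(x)-G_\star(x)|\le (e^\delta-1)(M+\tau)\) on \(\Omega_M\). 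Once we have it, monotonicity of \(u\mapsto u^\theta\) on \([0,\infty)\) handles the first term.

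For the score bound, I would use the cone comparison in its two-sided form, as in Lemma~\ref{lem:cone-log-equivalence}:
\[
  e^{-\delta}(G_\star+\tau)\le \widehat G+\tau\le e^{\delta}(G_\star+\tau).
\]
Subtracting \(G_\star+\tau\) throughout yields
\[
  -(1-e^{-\delta})(G_\star+\tau)\le \widehat G-G_\star\le (e^\delta-1)(G_\star+\tau).
\]
The elementary inequality \(1-e^{-\delta}\le e^{\delta}-1\) (equivalently \(e^\delta+e^{-\delta}\ge2\)) then gives \(|\widehat G-G_\star|\le (e^\delta-1)(G_\star+\tau)\). On \(\Omega_M\) we have \(G_\star\le M\), which produces \((e^\delta-1)(M+\tau)\). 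Substituting this bound into the display above proves \eqref{eq:g-envelope-error-pilot-region}.

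For the ``consequently'' clause, I would split each threshold error by the triangle inequality:
\[
  |\widehat R_{q_R}-R_{q_R}^\star|\le |\widehat R_{q_R}-\mathsf Q_{q_R}(\widehat G)|+|\mathsf Q_{q_R}(\widehat G)-\mathsf Q_{q_R}(G_\star)|.
\]
The argument for \(S\) is identical. The second piece is bounded by \((e^\delta-1)(\mathsf Q_q(G_\star)+\tau)\), by the same subtraction argument applied to \eqref{eq:quantile-sandwich}. The first piece is bounded with probability at least \(1-\zeta\) by Lemma~\ref{lem:pilot-quantile-stability}; for the two levels together, a union bound gives probability at least \(1-2\zeta\).

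I expect the only real subtlety to be bookkeeping rather than analysis. The quantile sandwich requires cone comparison \(\nu\)-almost surely under the pilot law, so I would note that \(\Omega_M\) should carry the pilot mass, or else that the comparison holds on the support of \(\nu\). The quantile lemma's hypotheses (margin and concentration) must also be assumed at both levels \(q_R\) and \(q_S\). The envelope inequality itself is deterministic and contains no hard step.
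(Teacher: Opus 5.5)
Your proposal is correct and follows the paper's proof. The paper also derives \(|\widehat G-G_\star|\le(e^\delta-1)(G_\star+\tau)\le(e^\delta-1)(M+\tau)\) from the cone comparison, applies Lemma~\ref{lem:g-localized-envelope-stability} with the same substitutions, and handles the ``consequently'' clause by inserting \eqref{eq:quantile-sandwich} and \eqref{eq:empirical-quantile-error}. Your use of \(1-e^{-\delta}\le e^\delta-1\), the triangle-inequality split of the threshold errors, and your remarks about the support of \(\nu\) and the hypotheses at both quantile levels fill in details the paper leaves implicit.
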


\begin{proof}
The cone comparison gives, on \(\Omega_M\),
\begin{equation*}
  |\widehat G(x)-G_\star(x)|
  \le (e^\delta-1)(G_\star(x)+\tau)
  \le (e^\delta-1)(M+\tau).
\end{equation*}
Applying Lemma~\ref{lem:g-localized-envelope-stability} with
\[
  G=G_\star,
  \qquad
  \widetilde G=\widehat G,
  \qquad
  R=R_{q_R}^\star,
  \qquad
  S=S_{q_S}^\star,
  \qquad
  \widetilde R=\widehat R_{q_R},
  \qquad
  \widetilde S=\widehat S_{q_S}
\]
gives \eqref{eq:g-envelope-error-pilot-region}.  The last statement follows
by inserting the threshold controls from \eqref{eq:quantile-sandwich} and
\eqref{eq:empirical-quantile-error}.
\end{proof}

\subsection{From envelope errors to modified-drift errors}
\label{sec:modified-drift-error}

The previous subsection controls the localized-envelope error.  This is not yet
an error of the Markov transition.  The chain uses an envelope through the
modified drift
\begin{equation*}
  b_A(x):=\Phi_A(x)b(x),
  \qquad
  \Phi_A(x)=\frac{1}{1+\eta^\alpha A(x)}.
\end{equation*}
Thus the envelope error must be converted into an error of the modified drift.
This conversion is needed because the stationary perturbation argument below
compares transition kernels.  The transition depends on \(A\) through the modified drift \(b_A\), not through
\(A\) directly.

The next lemma records this elementary step for two fixed deterministic
envelopes.  It will be applied with \(A=\widehat A\), the proxy-quantile
localized envelope, and \(B=A_\star\), the corresponding growth-score
envelope.

\begin{lemma}[Envelope error controls modified-drift error]
\label{lem:modified-drift-error}
Let \(A\) and \(B\) be two nonnegative deterministic envelopes.  Define
\begin{equation*}
  \Phi_A(x)=\frac{1}{1+\eta^\alpha A(x)},
  \qquad
  b_A(x)=\Phi_A(x)b(x),
\end{equation*}
and define \(\Phi_B\) and \(b_B\) in the same way.  Then
\begin{equation}
  |\Phi_A(x)-\Phi_B(x)|
  \le \eta^\alpha |A(x)-B(x)|.
  \label{eq:retention-lipschitz}
\end{equation}
Consequently,
\begin{equation}
  \|b_A(x)-b_B(x)\|
  \le
  \eta^\alpha \|b(x)\| |A(x)-B(x)|.
  \label{eq:weighted-envelope-distortion}
\end{equation}
The corresponding deterministic proposal-location residual satisfies
\begin{equation}
  \eta\|b_A(x)-b_B(x)\|
  \le
  \eta^{1+\alpha}\|b(x)\| |A(x)-B(x)|.
  \label{eq:euler-drift-perturbation}
\end{equation}
\end{lemma}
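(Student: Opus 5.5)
The plan is a direct algebraic computation, since the lemma concerns two fixed envelopes at a single point \(x\) and needs no earlier result beyond the definitions. Throughout, write \(a=\eta^\alpha A(x)\ge0\) and \(c=\eta^\alpha B(x)\ge0\). These are nonnegative because \(A,B\) are nonnegative envelopes and \(\eta>0\).

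For \eqref{eq:retention-lipschitz}, combine the two fractions over a common denominator:
\[
  \Phi_A(x)-\Phi_B(x)=\frac{1}{1+a}-\frac{1}{1+c}=\frac{c-a}{(1+a)(1+c)}.
\]
Since \(a,c\ge0\), the product \((1+a)(1+c)\) is at least \(1\). Hence \(|\Phi_A(x)-\Phi_B(x)|\le|c-a|=\eta^\alpha|A(x)-B(x)|\). Equivalently, the map \(t\mapsto 1/(1+t)\) is \(1\)-Lipschitz on \([0,\infty)\), because its derivative has absolute value \((1+t)^{-2}\le1\).

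For \eqref{eq:weighted-envelope-distortion}, factor out the common vector \(b(x)\):
\[
  b_A(x)-b_B(x)=\bigl(\Phi_A(x)-\Phi_B(x)\bigr)b(x).
\]
Taking norms gives \(\|b_A(x)-b_B(x)\|=|\Phi_A(x)-\Phi_B(x)|\,\|b(x)\|\). Inserting \eqref{eq:retention-lipschitz} then yields the stated bound.

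The residual bound \eqref{eq:euler-drift-perturbation} follows by multiplying \eqref{eq:weighted-envelope-distortion} by \(\eta>0\).

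No step is a real obstacle. The only point requiring care is that the nonnegativity of \(A\) and \(B\) is used to bound the denominator \((1+a)(1+c)\) below by \(1\). Without nonnegativity, \(1/(1+t)\) fails to be \(1\)-Lipschitz near \(t=-1\), so this hypothesis should be stated explicitly in the proof.
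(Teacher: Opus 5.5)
Your proof is correct and matches the paper's argument. The paper bounds the derivative of \(a\mapsto(1+\eta^\alpha a)^{-1}\) by \(\eta^\alpha\) on \([0,\infty)\) and applies the mean-value theorem, where you compute the difference exactly over the common denominator \((1+a)(1+c)\ge1\). It then multiplies by \(\|b(x)\|\) and by \(\eta\), just as you do.
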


\begin{proof}
The scalar map \(a\mapsto(1+\eta^\alpha a)^{-1}\) has derivative
\begin{equation*}
  -\eta^\alpha(1+\eta^\alpha a)^{-2}.
\end{equation*}
Its absolute value is bounded by \(\eta^\alpha\) on \([0,\infty)\).  The
mean-value theorem gives \eqref{eq:retention-lipschitz}.  Multiplying by
\(\|b(x)\|\) gives \eqref{eq:weighted-envelope-distortion}.  Multiplying once
more by the Euler stepsize \(\eta\) gives
\eqref{eq:euler-drift-perturbation}.
\end{proof}

Thus the localized-envelope error from
Corollary~\ref{cor:g-envelope-error-pilot-region} enters the transition
through the modified drift.  The extra factor \(\eta\) appears when this
modified-drift error is converted into a one-step proposal-location residual.

\subsection{From denominator residuals to stationary observable errors}
\label{subsec:stationary-observable-residual}

 We now record how
these denominator residuals enter stationary observable errors.  The point is
not to reprove the Poisson bridge, but to use the companion stationary estimate
as a transfer rule from denominator-level errors to observable-level bias.

We use the stationary Poisson-bridge estimate from the companion
deterministic-envelope paper, namely \cite[Theorem~7.5]{companion}.  For an
observable \(H\) in the admissible test class, let \(u_H\) solve
\begin{equation*}
  \mathcal L u_H = H-\pi(H),
  \qquad \pi(u_H)=0 .
\end{equation*}
Let \(W_H\) denote the polynomial Poisson weight associated with the derivative
bounds on \(u_H\).  We assume throughout this subsection that the hypotheses
needed to invoke \cite[Theorem~7.5]{companion} hold for the final deterministic
denominator constructed above.

Set
\begin{equation*}
  A_{\rm final,\eta}(x)
  :=\frac{D_{\eta,{\rm final}}(x)-1}{\eta^\alpha},
  \qquad
  \Phi_{\rm final}(x)=\frac1{D_{\eta,{\rm final}}(x)} .
\end{equation*}
Since \(D_{\eta,{\rm final}}\ge1\), \(A_{\rm final,\eta}\) is a nonnegative
state-deterministic envelope.  Applying \cite[Theorem~7.5]{companion} gives
\begin{equation}
\begin{aligned}
  |\pi_\eta^{\rm final}(H)-\pi(H)|
  \le C_H\int W_H(x)
  \Big[
     \eta^{\gamma/2}
     +\eta^\alpha A_{\rm final,\eta}(x)\norm{b(x)}
     +\eta\,\mathsf{Var}_x(\varepsilon_{\eta,m})
  \Big]  \,\pi_\eta^{\rm final}(\dd x),
\end{aligned}
\label{eq:companion-final-residual-input}
\end{equation}
where, following the companion convention, the oracle contribution is measured
through the envelope-scaled perturbation
\begin{equation*}
  \varepsilon_{\eta,m}(x,U)
  :=
  -\,\frac{\zeta_m(x,U)}{1+\eta^\alpha A_\eta(x)},
  \qquad
  \E[\varepsilon_{\eta,m}(x,U)\mid x]=0,
\end{equation*}
and its conditional second moment
\begin{equation*}
  \mathsf{Var}_x(\varepsilon_{\eta,m})
  :=\E\big[\norm{\varepsilon_{\eta,m}(x,U)}^2\mid x\big].
\end{equation*}
This is the same scalar oracle-noise scale, with the minibatch index \(m\)
retained, as the quantity \(\mathsf{Var}_x(\varepsilon_\eta)\) used in
\cite[Theorem~7.5]{companion}.
In the exact-gradient case the last term is omitted.

We next decompose only the envelope residual in
\eqref{eq:companion-final-residual-input}.  Define the growth-score reference
envelope
\begin{equation*}
  A_\star(x)
  :=A_{G_\star,R_{q_R}^\star,S_{q_S}^\star}(x)
  =(G_\star(x)-R_{q_R}^\star)_+^\theta
   +(G_\star(x)-S_{q_S}^\star)_+,
\end{equation*}
and recall that
\begin{equation*}
  A_{\rm loc}(x)
  =A_{\widehat G,\widehat R_{q_R},\widehat S_{q_S}}(x).
\end{equation*}
Then
\begin{equation*}
  A_{\rm final,\eta}
  =
  A_\star
  +(A_{\rm loc}-A_\star)
  +(A_{\rm final,\eta}-A_{\rm loc}).
\end{equation*}
Therefore, pointwise,
\begin{equation}
\begin{aligned}
  \eta^\alpha A_{\rm final,\eta}(x)\norm{b(x)}
  \le {}&
  \eta^\alpha A_\star(x)\norm{b(x)}  \\
  &+\eta^\alpha |A_{\rm loc}(x)-A_\star(x)|\norm{b(x)} \\
  &+\eta^\alpha |A_{\rm final,\eta}(x)-A_{\rm loc}(x)|\norm{b(x)} .
\end{aligned}
\label{eq:denominator-residual-split}
\end{equation}
The three terms are, respectively, the reference-envelope residual, the
proxy-transfer residual, and the tail-floor residual.

The last term has a simpler denominator-level form.  Since
\(D_{\eta,{\rm loc}}=1+\eta^\alpha A_{\rm loc}\) and
\(D_{\eta,{\rm final}}\ge D_{\eta,{\rm loc}}\),
\begin{equation}
  \eta^\alpha |A_{\rm final,\eta}(x)-A_{\rm loc}(x)|
  =
  D_{\eta,{\rm final}}(x)-D_{\eta,{\rm loc}}(x).
  \label{eq:tail-floor-residual-denominator-form}
\end{equation}
It vanishes unless
\begin{equation*}
  D_{\eta,{\rm tail}}(x)>D_{\eta,{\rm loc}}(x),
\end{equation*}
that is, unless the polynomial tail floor is active.

Combining \eqref{eq:companion-final-residual-input},
\eqref{eq:denominator-residual-split}, and
\eqref{eq:tail-floor-residual-denominator-form} yields the following
bookkeeping form.

\begin{proposition}[Stationary residual for the proxy--tail denominator]
\label{prop:proxy-tail-stationary-residual}
Let \(D_{\eta,{\rm final}}\) be the calibrated final denominator, and let
\(\pi_\eta^{\rm final}\) be an invariant law of the corresponding chain.
Assume that \cite[Theorem~7.5]{companion} applies to the observable \(H\) with
Poisson weight \(W_H\).  Then
\begin{equation}
\begin{aligned}
  |\pi_\eta^{\rm final}(H)-\pi(H)|
  \le C_H\int W_H(x)
  \Big[
     &\eta^{\gamma/2}
     +\eta^\alpha A_\star(x)\norm{b(x)} \\
     &+\eta^\alpha |A_{\rm loc}(x)-A_\star(x)|\norm{b(x)} \\
     &+\bigl(D_{\eta,{\rm final}}(x)-D_{\eta,{\rm loc}}(x)\bigr)\norm{b(x)} \\
     &+\eta\,\mathsf{Var}_x(\varepsilon_{\eta,m})
  \Big] \,\pi_\eta^{\rm final}(\dd x).
\end{aligned}
\label{eq:proxy-tail-stationary-residual}
\end{equation}
In the exact-gradient case the term \(\eta\,\mathsf{Var}_x(\varepsilon_{\eta,m})\) is omitted.  If the
local proxy denominator is used without the tail floor, then
\(D_{\eta,{\rm final}}=D_{\eta,{\rm loc}}\), and the tail-floor residual is zero.
\end{proposition}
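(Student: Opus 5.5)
The proof is bookkeeping: we substitute the pointwise splitting of the envelope residual into the companion estimate and integrate. First we check that the hypotheses of \cite[Theorem~7.5]{companion} apply to the final denominator, and we record the resulting bound \eqref{eq:companion-final-residual-input}. To do so we set \(A_{\rm final,\eta}=(D_{\eta,{\rm final}}-1)/\eta^\alpha\). This is nonnegative because \(D_{\eta,{\rm final}}\ge D_{\eta,{\rm loc}}\ge1\). It is state-deterministic after calibration, by the first part of Proposition~\ref{prop:tail-corrected-stability}. Hence \(A_{\rm final,\eta}\) is an admissible fixed envelope, and the assumed applicability of the companion theorem to \(H\) with weight \(W_H\) gives \eqref{eq:companion-final-residual-input} with \(\pi_\eta^{\rm final}\) as the invariant law. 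The oracle term in that bound is left untouched.

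Second, we bound the envelope term pointwise. Write \(A_{\rm final,\eta}=A_\star+(A_{\rm loc}-A_\star)+(A_{\rm final,\eta}-A_{\rm loc})\) and apply the triangle inequality, using that all terms are multiplied by the nonnegative factor \(\eta^\alpha\norm{b(x)}\). This gives \eqref{eq:denominator-residual-split}. Note that \(A_\star\ge0\), so no absolute value is needed on the first term. For the third term we use \eqref{eq:tail-floor-residual-denominator-form}. Since \(\eta^\alpha A_{\rm final,\eta}=D_{\eta,{\rm final}}-1\) and \(\eta^\alpha A_{\rm loc}=D_{\eta,{\rm loc}}-1\), we get \(\eta^\alpha(A_{\rm final,\eta}-A_{\rm loc})=D_{\eta,{\rm final}}-D_{\eta,{\rm loc}}\ge0\) because \(D_{\eta,{\rm final}}=\max\{D_{\eta,{\rm loc}},D_{\eta,{\rm tail}}\}\). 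The absolute value can therefore be dropped, which yields the denominator form.

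Third, we insert this pointwise bound into the integrand of \eqref{eq:companion-final-residual-input}. Since \(W_H\ge0\) and \(\pi_\eta^{\rm final}\) is a positive measure, the integral is monotone, and this produces \eqref{eq:proxy-tail-stationary-residual} with the same constant \(C_H\). All integrands are nonnegative, so the split integrals are well defined in \([0,\infty]\) and no integrability issue arises in the inequality itself.

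Finally, the special cases. In the exact-gradient case \(\zeta_m\equiv0\), so \(\mathsf{Var}_x(\varepsilon_{\eta,m})=0\) and that term is omitted, in line with the companion theorem's exact-gradient version. Without the tail floor we have \(D_{\eta,{\rm final}}=D_{\eta,{\rm loc}}\), so the tail residual vanishes identically. I expect the only real obstacle to be the first step: confirming that the companion theorem's hypotheses, namely its admissible envelope class and moment bounds, transfer to \(A_{\rm final,\eta}\). I would discharge this through Proposition~\ref{prop:tail-corrected-stability}, which supplies the effective-linearity and polynomial-growth inputs and the uniform moments, together with the standing assumption stated in the proposition.
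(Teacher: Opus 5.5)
Your proposal is correct and follows the paper's argument: apply \cite[Theorem~7.5]{companion} to $A_{\rm final,\eta}=(D_{\eta,{\rm final}}-1)/\eta^\alpha$, split $A_{\rm final,\eta}=A_\star+(A_{\rm loc}-A_\star)+(A_{\rm final,\eta}-A_{\rm loc})$ pointwise, rewrite the last piece as $D_{\eta,{\rm final}}-D_{\eta,{\rm loc}}\ge 0$, and integrate against the nonnegative weight. The step you flag as the main obstacle needs no separate verification, because the proposition (and the surrounding subsection) takes the applicability of the companion theorem to the final denominator as a hypothesis rather than deriving it from Proposition~\ref{prop:tail-corrected-stability}.
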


It remains to connect the proxy-transfer term with the calibration result.  On a
pilot region \(\Omega_M\), Corollary~\ref{cor:g-envelope-error-pilot-region}
gives
\begin{equation*}
  |A_{\rm loc}(x)-A_\star(x)|
  \le \mathcal E_{\rm env}(M),
  \qquad x\in\Omega_M,
\end{equation*}
where
\begin{equation}
\begin{aligned}
  \mathcal E_{\rm env}(M)
  := {}&
  \bigl[(e^\delta-1)(M+\tau)\bigr]^\theta
  + |\widehat R_{q_R}-R_{q_R}^\star|^\theta \\
  &+(e^\delta-1)(M+\tau)
  + |\widehat S_{q_S}-S_{q_S}^\star| .
\end{aligned}
\label{eq:env-error-short-hand}
\end{equation}
Therefore, on the pilot region,
\begin{equation}
\begin{aligned}
&\int_{\Omega_M} W_H(x)\eta^\alpha
  |A_{\rm loc}(x)-A_\star(x)|\norm{b(x)}
  \,\pi_\eta^{\rm final}(\dd x) \\
&\qquad\le
  \eta^\alpha \mathcal E_{\rm env}(M)
  \int_{\Omega_M} W_H(x)\norm{b(x)}
  \,\pi_\eta^{\rm final}(\dd x).
\end{aligned}
\label{eq:proxy-transfer-integrated-control}
\end{equation}

The calibration bound \eqref{eq:env-error-short-hand} is local on
\(\Omega_M\), whereas the proxy-transfer term in
\eqref{eq:proxy-tail-stationary-residual} is integrated over the full state
space.  We therefore split it into the pilot-region contribution and an
off-pilot remainder:
\begin{equation}
  \int W_H\,\eta^\alpha|A_{\rm loc}-A_\star|\norm{b}\,\dd\pi_\eta^{\rm final}
  =
  \int_{\Omega_M}(\cdots)\,\dd\pi_\eta^{\rm final}
  +R_{\rm tail}(M),
  \qquad
  R_{\rm tail}(M):=\int_{\Omega_M^c}(\cdots)\,\dd\pi_\eta^{\rm final}.
  \label{eq:proxy-transfer-split}
\end{equation}
The first term is bounded by \eqref{eq:proxy-transfer-integrated-control}.  For
the second term, the polynomial-tail domination condition
\eqref{eq:polynomial-tail-domination} and the final effective-linearity bound
\eqref{eq:final-effective-linearity} give a polynomial envelope for the
integrand on \(\Omega_M^c\).  Together with the uniform moment bound from
Proposition~\ref{prop:tail-corrected-stability}, this yields, for some
polynomial order \(m'\),
\begin{equation}
  R_{\rm tail}(M)
  \le
  C\int_{\Omega_M^c} W_H(x)(1+\norm{x}^{m'})\,\pi_\eta^{\rm final}(\dd x)
  \xrightarrow[M\to\infty]{}0,
  \label{eq:tail-remainder-vanishes}
\end{equation}
uniformly for \(\eta\le\eta_0\).  Thus the off-pilot proxy-transfer term is
controlled by the same tail mechanism that supplies the global stability input.

Collecting \eqref{eq:proxy-transfer-integrated-control},
\eqref{eq:proxy-transfer-split}, and \eqref{eq:tail-remainder-vanishes}, the
final stationary residual decomposes as
\begin{equation}
\begin{aligned}
  |\pi_\eta^{\rm final}(H)-\pi(H)|
 & \le C_H\Big[
    \underbrace{\eta^{\gamma/2}}_{\text{Euler}}
    +\underbrace{\textstyle\int W_H\,\eta\,\mathsf{Var}_x(\varepsilon_{\eta,m})\,\dd\pi_\eta^{\rm final}}_{\text{oracle covariance}}
    +\underbrace{\textstyle\int W_H\,\eta^\alpha A_\star\norm{b}\,\dd\pi_\eta^{\rm final}}_{G_\star\text{-envelope residual}} \\
    &+\underbrace{\eta^\alpha \mathcal E_{\rm env}(M)\textstyle\int_{\Omega_M} W_H\norm{b}\,\dd\pi_\eta^{\rm final}+R_{\rm tail}(M)}_{\text{proxy transfer}}
    +\underbrace{\textstyle\int(D_{\eta,{\rm final}}-D_{\eta,{\rm loc}})\norm{b}\,\dd\pi_\eta^{\rm final}}_{\text{tail floor}}
  \Big],
\end{aligned}
\label{eq:final-residual-bookkeeping}
\end{equation}
with the oracle-covariance term omitted in the exact-gradient case.  The
proxy-transfer contribution is the explicit pilot-region price
\(\mathcal E_{\rm env}(M)\) together with the remainder \(R_{\rm tail}(M)\),
which vanishes as \(M\to\infty\), and the tail-floor contribution is supported
only on the tail-activation region.

\section{Experiments}
\label{sec:experiments}

\paragraph{Experimental protocol.}
All experiments use the calibrated proxy-quantile denominator constructed in
Algorithm~\ref{alg:proxy-quantile-calibration}.  During sampling, the denominator
is a fixed state-deterministic function,
\begin{equation}
\label{eq:post-denominator}
  D_{\eta,\alpha,q_R,q_S,\theta}(x)
  =1+\eta^\alpha A_{\widehat G,q_R,q_S,\theta}(x),
  \qquad
  A_{\widehat G,q_R,q_S,\theta}(x)
  = (\widehat G(x)-\widehat R_{q_R})_+^\theta
    +(\widehat G(x)-\widehat S_{q_S})_+ .
\end{equation}
Here \(q_R,q_S\) set the activation thresholds, \(\alpha\) sets the denominator
multiplier, \(\theta\) sets the soft-ramp shape, and \(\eta\) is the transition
scale.  When global stability control is needed, this local denominator is
combined with the tail floor in \eqref{eq:tail-final-denominator}.

In the sampling update, the mini-batch randomness enters only through the
numerator.  Given an unbiased stochastic-gradient oracle
\(\widehat g_m(X_k,U_k)\), the transition is
\begin{equation}
\label{eq:experimental-sgld-update}
  X_{k+1}
  =
  X_k
  -
  \frac{\eta\,\widehat g_m(X_k,U_k)}
       {D_{\eta,\alpha,q_R,q_S,\theta}(X_k)}
  +
  \sqrt{2\beta^{-1}\eta}\,Z_{k+1},
  \qquad
  Z_{k+1}\sim N(0,I_d).
\end{equation}
Thus the denominator is not recomputed from the current mini-batch draw.

Throughout the experiments, \emph{proxy-quantile} denotes the deterministic
denominator produced from the fitted log-scale proxy score.  The
\emph{\(G_\star\)-envelope} applies the same quantile construction to the true
growth score and serves as a score reference.

We also use two standard baselines.  The random-denominator baseline puts the
current mini-batch gradient into the denominator:
\begin{equation}
\label{eq:random-denominator-baseline}
  D_{\eta,{\rm rand}}(x,U)
  =
  1+\eta^\alpha\bigl(1+\norm{\widehat g_m(x,U)}\bigr).
\end{equation}
Thus the same stochastic-gradient draw enters both the numerator and the
denominator.  The global-hard baseline is deterministic but nonlocalized:
\begin{equation}
\label{eq:global-hard-baseline}
  D_{\eta,{\rm gh}}(x)
  =
  1+\eta^\alpha\bigl(1+\norm{\nabla F_z(x)}\bigr).
\end{equation}
It uses the full-gradient scale everywhere and has no pilot-calibrated safe,
intermediate, or tail activation regions.

For the high-dimensional baseline sweep in
Experiment~\ref{sec:exp3-d50-comparison}, we also use scalar-rescaled versions
of these two baselines:
\begin{equation}
\label{eq:scaled-baseline-denominators}
  D_{\eta,{\rm rand}}^{(c)}(x,U)
  =
  1+\eta^\alpha c\bigl(1+\norm{\widehat g_m(x,U)}\bigr),
  \qquad
  D_{\eta,{\rm gh}}^{(c)}(x)
  =
  1+\eta^\alpha c\bigl(1+\norm{\nabla F_z(x)}\bigr).
\end{equation}
The scalar \(c>0\) changes only the overall taming strength of the two fixed
baseline forms; it is used as a small fairness sweep.  The proxy-quantile and \(G_\star\)-envelope
rows are kept fixed after quantile calibration.

In Experiment~\ref{sec:exp2-proxy-family}
we additionally report a global-polynomial proxy baseline.  This row is the
normalized, growth-score-scale analogue of the coarse polynomial proxy used in
Experiment~6 of the companion deterministic-envelope paper~\cite{companion}.

We report risk as a stationary observable; when an exact-gradient reference is
available, we report the absolute gap to that reference.

The experiments test the proxy-quantile construction in five steps.  We first check the proxy score and the resulting proxy envelope.  We then give the main high-dimensional quartic-regression comparison through observable gaps relative to exact-gradient and \(G_\star\)-envelope references.  The final two diagnostics isolate the far-tail safeguard and the production-stage computational cost.

For the quartic-regression experiments, the target risk is
\[
  F_z(w)=\frac1n\sum_{i=1}^n \frac14(a_i^\top w-y_i)^4
  +\frac{\lambda}{2}\norm{w}^2.
\]
We report \(F_z(W)\), \(\norm{W}^2\), and \(\norm{\nabla F_z(W)}\).  The exact-gradient chain provides the numerical reference scale.  The \(G_\star\)-envelope applies the same quantile construction to the ideal score \(G_\star\) and serves as the design reference.  The random-denominator and global-hard rows are the main competing baselines.  The global-polynomial row in Experiment~\ref{sec:exp2-proxy-family} links back to the polynomial proxy comparison in the companion experiments.

\subsection{Experiment 1: numerical proxy-sandwich diagnostic}
\label{sec:exp1-proxy-scale}

This diagnostic evaluates the fitted proxy score \(\widehat G\) on the quartic-regression pilot sample used below.  It reports the numerical level-set and positive-part sandwiches corresponding to
Proposition~\ref{prop:positive-part-sandwich}.

The log-shift is chosen from the pilot sample \({\mathcal P}\) by the fixed
preprocessing rule
\begin{equation}
\label{eq:exp-log-shift-rule}
  \tau
  =
  \max\Bigl\{10^{-6},\,
  0.01\,\operatorname{median}_{x\in {\mathcal P}}G_\star(x)
  \Bigr\}.
\end{equation}
The regression target is \(\log(G_\star+\tau)\), and the fitted proxy score is
reconstructed as
\begin{equation*}
  \widehat G(x)=[\exp\{\widehat h(x)\}-\tau]_+ .
\end{equation*}
The numerical floor prevents logarithmic instability near zero, while the factor
\(0.01\) keeps the shift small relative to the typical pilot growth-score scale.
In this diagnostic, \(\tau=4.97\times 10^{-3}\).

The parameter \(\delta\) below is not used by the sampling algorithm.  It is a
diagnostic tolerance for the empirical shifted log-scale error
\begin{equation*}
  \left|\log(\widehat G(x)+\tau)-\log(G_\star(x)+\tau)\right| .
\end{equation*}
Smaller \(\delta\) gives a tighter sandwich and hence a stricter test; the
reported violation masses are therefore expected to decrease as \(\delta\)
increases.

For the level-set diagnostic, let \(\widehat Q(q)\) be the empirical
\(q\)-quantile of \(\widehat G\).  Define the two growth-score thresholds
\begin{equation*}
  s_q^-(\delta)=e^{-\delta}\bigl(\widehat Q(q)+\tau\bigr)-\tau,
  \qquad
  s_q^+(\delta)=e^{\delta}\bigl(\widehat Q(q)+\tau\bigr)-\tau .
\end{equation*}
If the shifted log-scale error is at most \(\delta\), then the proxy sublevel set
satisfies
\begin{equation}
\label{eq:empirical-level-sandwich}
  \{G_\star\le s_q^-(\delta)\}
  \subset
  \{\widehat G\le \widehat Q(q)
  \}
  \subset
  \{G_\star\le s_q^+(\delta)\} .
\end{equation}
Table~\ref{tab:exp1-delta-levelset-sandwich} reports the empirical masses of the two violation sets on the diagnostic sample.  A miss is a point in \(\{G_\star\le s_q^-(\delta)\}\) that is excluded by the proxy sublevel set; a leak is a point included by the proxy sublevel set but lying outside \(\{G_\star\le s_q^+(\delta)\}\). The table reports the empirical masses of these two violation sets over the diagnostic sample.

\begin{table}[H]
\centering
\caption{Empirical level-set sandwich violations for the fitted proxy score
\(\widehat G\).}
\label{tab:exp1-delta-levelset-sandwich}
\begin{tabular}{ccccc}
\toprule
\(q\) & \(\delta\) & miss & leak & max viol. \\
\midrule
$0.70$ & $0.50$ & $0.032$ & $0.056$ & $0.056$ \\
$0.70$ & $0.75$ & $0.008$ & $0.018$ & $0.018$ \\
$0.70$ & $1.00$ & $0.000$ & $0.003$ & $0.003$ \\
\addlinespace
$0.90$ & $0.50$ & $0.008$ & $0.047$ & $0.047$ \\
$0.90$ & $0.75$ & $0.001$ & $0.013$ & $0.013$ \\
$0.90$ & $1.00$ & $0.000$ & $0.001$ & $0.001$ \\
\bottomrule
\end{tabular}
\end{table}

The same shifted log-scale tolerance also gives the positive-part comparison.
With \(s_q^-(\delta)\) and \(s_q^+(\delta)\) as above, the pointwise implication
is
\begin{equation}
\label{eq:empirical-positive-part-sandwich}
  e^{-\delta}\bigl(G_\star-s_q^+(\delta)\bigr)_+
  \le
  \bigl(\widehat G-\widehat Q(q)\bigr)_+
  \le
  e^{\delta}\bigl(G_\star-s_q^-(\delta)\bigr)_+ .
\end{equation}
Table~\ref{tab:exp1-delta-positive-part-sandwich} reports the empirical masses
where the two inequalities in \eqref{eq:empirical-positive-part-sandwich} fail.
A lower violation means that
\((\widehat G-\widehat Q(q))_+\) is smaller than the lower sandwich value
\(e^{-\delta}(G_\star-s_q^+(\delta))_+\).  An upper violation means that
\((\widehat G-\widehat Q(q))_+\) exceeds the upper sandwich value
\(e^\delta(G_\star-s_q^-(\delta))_+\).

\begin{table}[H]
\centering
\caption{Empirical positive-part sandwich violations for the fitted proxy score
\(\widehat G\).}
\label{tab:exp1-delta-positive-part-sandwich}
\begin{tabular}{ccccc}
\toprule
\(q\) & \(\delta\) & lower viol. & upper viol. & max viol. \\
\midrule
$0.70$ & $0.50$ & $0.088$ & $0.051$ & $0.088$ \\
$0.70$ & $0.75$ & $0.028$ & $0.014$ & $0.028$ \\
$0.70$ & $1.00$ & $0.006$ & $0.001$ & $0.006$ \\
\addlinespace
$0.90$ & $0.50$ & $0.053$ & $0.012$ & $0.053$ \\
$0.90$ & $0.75$ & $0.014$ & $0.001$ & $0.014$ \\
$0.90$ & $1.00$ & $0.002$ & $0.000$ & $0.002$ \\
\bottomrule
\end{tabular}
\end{table}

The violation masses decrease quickly as the diagnostic tolerance is relaxed.
Thus the fitted proxy score exhibits the two numerical sandwiches used by the
proxy theory: it places the relevant sublevel sets close to those of
\(G_\star\), and its positive parts above the activation thresholds are controlled
by the corresponding growth-score positive parts.  The next experiment checks
the resulting object at the denominator level by comparing the proxy-quantile
envelope with the \(G_\star\)-envelope.

\subsection{Experiment 2: proxy-quantile envelope versus \(G_\star\)-envelope}
\label{sec:exp2-proxy-family}
We next test whether the proxy score gives a useful denominator.  The proxy-quantile envelope applies the same quantile construction as the \(G_\star\)-envelope, but uses the proxy score \(\widehat G\) to define the activation regions.  The main question is whether these proxy-defined regions preserve the finite-chain behavior of the growth-score design.

Table~\ref{tab:exp2-proxy-family} compares this construction with three kinds of references.  The exact-gradient chain gives the numerical reference scale.  The \(G_\star\)-envelope gives the design reference.  The polynomial-proxy, random-denominator, and global-polynomial rows give competing low-cost or nonlocalized alternatives.  The last three columns report absolute gaps to the exact-gradient reference for the three observables.

For the global-polynomial row we use the normalized polynomial proxy
\begin{equation}
\label{eq:global-poly-baseline}
  G_{\rm poly}(w)
  =
  C_{\rm poly}\frac{1+\norm{w}^3}{1+\norm{w}},
  \qquad
  D_{\eta,{\rm gp}}(w)
  =
  1+\eta^\alpha\bigl(1+G_{\rm poly}(w)\bigr).
\end{equation}
This row is included to connect with Experiment~6 of the companion
paper~\cite{companion}, where the coarse polynomial envelope
\(C_{\rm poly}(1+\norm{w}^3)\) was used as a low-cost deterministic growth
indicator for quartic regression.  Because the present experiment works on the
normalized growth-score scale, we divide the polynomial envelope by
\(1+\norm{w}\).  The resulting denominator is deterministic and
nonlocalized; it does not use the fitted log-scale proxy or pilot-calibrated
activation geometry.

\begin{table}[H]
\centering
\caption{Log-scale and polynomial proxy envelopes.}
\label{tab:exp2-proxy-family}
\resizebox{\textwidth}{!}{%
\begin{tabular}{lrrrrrr}
\toprule
Method & Risk \(F_z(W)\) & Scale \(\norm{W}^2\) & Grad. size \(\norm{\nabla F_z(W)}\) & \(|\Delta F_z|\) & \(|\Delta \norm{W}^2|\) & \(|\Delta \norm{\nabla F_z}|\) \\
\midrule
exact-gradient reference & $1.370$ & $5.021$ & $1.801$ & $0.000$ & $0.000$ & $0.000$ \\
\(G_\star\)-envelope & $1.397$ & $5.076$ & $1.826$ & $0.050$ & $0.186$ & $0.050$ \\
proxy-quantile envelope & $1.385$ & $5.088$ & $1.811$ & $0.041$ & $0.192$ & $0.042$ \\
polynomial-proxy envelope & $1.864$ & $6.099$ & $2.213$ & $0.494$ & $1.078$ & $0.412$ \\
polynomial high-quantile envelope & $1.574$ & $5.489$ & $1.955$ & $0.206$ & $0.501$ & $0.163$ \\
random-denominator & $2.185$ & $6.378$ & $2.605$ & $0.815$ & $1.357$ & $0.804$ \\
global-polynomial denominator & $4.231$ & $9.635$ & $4.252$ & $2.861$ & $4.614$ & $2.451$ \\
\bottomrule
\end{tabular}}
\end{table}

Tables~\ref{tab:exp2-proxy-family} and~\ref{tab:exp2-gstar-relative-gaps} have different roles.  Table~\ref{tab:exp2-proxy-family} reports absolute finite-chain performance against the exact-gradient reference.  Table~\ref{tab:exp2-gstar-relative-gaps} removes this finite-chain reference scale and reports absolute gaps to the \(G_\star\)-envelope row.  This second table is the direct mechanism check for the local proxy layer: replacing the growth score by the proxy score changes the three observables by only about \(10^{-2}\), while the cruder baselines move them by one to two orders of magnitude more.

\begin{table}[H]
\centering
\caption{\(G_\star\)-relative observable gaps.}
\label{tab:exp2-gstar-relative-gaps}
\begin{tabular}{lccc}
\toprule
Method & \(|\Delta_{G_\star} F_z|\) & \(|\Delta_{G_\star}\norm{W}^2|\) & \(|\Delta_{G_\star}\norm{\nabla F_z}|\) \\
\midrule
proxy-quantile envelope & $0.012$ & $0.012$ & $0.015$ \\
polynomial-proxy envelope & $0.467$ & $1.023$ & $0.387$ \\
polynomial high-quantile envelope & $0.177$ & $0.413$ & $0.129$ \\
random-denominator & $0.788$ & $1.302$ & $0.779$ \\
global-polynomial denominator & $2.834$ & $4.559$ & $2.426$ \\
\bottomrule
\end{tabular}
\end{table}

Table~\ref{tab:exp2-proxy-family} shows that the proxy-quantile envelope nearly matches the \(G_\star\)-envelope.  Their observable values differ by about \(0.012\) in risk, \(0.013\) in squared norm, and \(0.015\) in gradient size.  Both rows are also close to the exact-gradient reference at the scale reported in the last three columns.

\subsection{Experiment 3: high-dimensional quartic-regression comparison}
\label{sec:exp3-d50-comparison}

We next test the proxy-quantile construction on a \(d=50\)
quartic-regression instance.  Unlike Experiment~1, this is a finite-chain
sampling comparison: the proxy score is fitted from a pilot sample, the
proxy-quantile denominator is fixed before production sampling, and the
resulting observables are compared with exact-gradient and tamed-denominator
baselines.  All rows use the same production stepsize \(\eta\), pilot protocol,
and production seeds.  The averages below are computed over twelve independent
production seeds.

\paragraph{Observable gaps to the exact-gradient reference.}
Table~\ref{tab:d50-observable-gaps-exact} reports signed observable gaps
relative to the exact-gradient chain at the same stepsize \(\eta\).  The
\(G_\star\)-envelope row applies the same quantile-localized response to the
growth score \(G_\star\), while the proxy-quantile row replaces \(G_\star\) by
the fitted proxy score \(\widehat G\).  The rows with \(c=0.5,1,2\) use the
scalar-rescaled baselines in \eqref{eq:scaled-baseline-denominators}.

\begin{table}[H]
\centering
\caption{Observable gaps relative to the exact-gradient chain.}
\label{tab:d50-observable-gaps-exact}
\begin{tabular}{lccc}
\toprule
Method & $\Delta F_z$ & $\Delta\norm{W}^2$ & $\Delta\norm{\nabla F_z(W)}$ \\
\midrule
\(G_\star\)-envelope & $+0.033\pm0.002$ & $+0.021\pm0.006$ & $+0.0076\pm0.0004$ \\
proxy-quantile & $+0.067\pm0.002$ & $-0.064\pm0.006$ & $+0.0163\pm0.0004$ \\
random denominator $c=0.5$ & $+0.461\pm0.010$ & $+0.679\pm0.025$ & $+0.105\pm0.002$ \\
global-hard denominator $c=0.5$ & $+0.305\pm0.007$ & $+0.461\pm0.017$ & $+0.070\pm0.001$ \\
random denominator $c=1$ & $+0.886\pm0.019$ & $+1.308\pm0.046$ & $+0.201\pm0.004$ \\
global-hard denominator $c=1$ & $+0.597\pm0.013$ & $+0.895\pm0.032$ & $+0.135\pm0.003$ \\
random denominator $c=2$ & $+1.637\pm0.035$ & $+2.452\pm0.082$ & $+0.368\pm0.007$ \\
global-hard denominator $c=2$ & $+1.139\pm0.024$ & $+1.720\pm0.059$ & $+0.257\pm0.005$ \\
\bottomrule
\end{tabular}
\end{table}

The proxy-quantile denominator remains close to the exact-gradient reference and to the \(G_\star\)-envelope row.  Its risk, squared-norm, and gradient-size gaps are all much smaller than those of the random-denominator and global-hard baselines under the tested scalar responses.  The scalar sweep over \(c\) is a small fairness check for these two fixed baseline forms: it changes only their overall taming strength, not their response geometry.  Even the least
aggressive tested baseline response, \(c=0.5\), remains well separated from the proxy-quantile and \(G_\star\)-envelope rows.  Thus the observed advantage is not masked by a reasonable rescaling of the baseline denominator.

\paragraph{Replacement cost relative to the \(G_\star\)-envelope.}
Table~\ref{tab:d50-observable-gaps-exact} uses the exact-gradient chain as an external sampling reference.  We now re-center the same observables at the \(G_\star\)-envelope.  This removes the exact-gradient reference scale and isolates the cost of replacing the growth score \(G_\star\) by the fitted proxy score \(\widehat G\). Table~\ref{tab:d50-gstar-relative-gaps} uses the \(G_\star\)-envelope as the internal reference.  This isolates the cost of replacing the growth score
\(G_\star\) by the fitted proxy score \(\widehat G\) after the same
quantile-localized response is applied.

\begin{table}[H]
\centering
\caption{Absolute observable gaps relative to the \(G_\star\)-envelope.}
\label{tab:d50-gstar-relative-gaps}
\begin{tabular}{lccc}
\toprule
Method & $|\Delta F_z|$ & $|\Delta\norm{W}^2|$ & $|\Delta\norm{\nabla F_z(W)}|$ \\
\midrule
proxy-quantile & $0.035\pm0.001$ & $0.084\pm0.002$ & $0.0088\pm0.0001$ \\
random denominator $c=0.5$ & $0.429\pm0.009$ & $0.658\pm0.022$ & $0.098\pm0.002$ \\
global-hard denominator $c=0.5$ & $0.272\pm0.006$ & $0.440\pm0.014$ & $0.062\pm0.001$ \\
random denominator $c=1$ & $0.853\pm0.018$ & $1.287\pm0.043$ & $0.194\pm0.003$ \\
global-hard denominator $c=1$ & $0.564\pm0.012$ & $0.874\pm0.029$ & $0.128\pm0.002$ \\
random denominator $c=2$ & $1.604\pm0.033$ & $2.431\pm0.080$ & $0.360\pm0.006$ \\
global-hard denominator $c=2$ & $1.106\pm0.023$ & $1.699\pm0.056$ & $0.249\pm0.004$ \\
\bottomrule
\end{tabular}
\end{table}

The \(G_\star\)-relative gaps are one order of magnitude smaller for the
proxy-quantile denominator than for the best random-denominator and global-hard
baselines in the table.  Thus the high-dimensional experiment supports the same
mechanism as Experiment~2: the fitted proxy score changes the
\(G_\star\)-envelope only mildly at the observable level, while cruder
denominator responses introduce substantially larger finite-chain distortion.

\subsection{Experiment 4: norm-polynomial tail-floor diagnostic}
\label{sec:exp4-tail-floor}

We next examine how often the norm-polynomial tail floor from
Section~\ref{sec:tail-floor-correction} changes the production-stage
denominator and what it does on the states where it is active.  To isolate this
question, we reuse the same $d=4$, $n=50$ quartic-regression instance and the
same proxy-calibration protocol as in Experiments~1--2.  The proxy score and
quantile thresholds are fitted once from the exact-gradient pilot trajectory
and then frozen throughout the diagnostic.

For this experiment we use the polynomial tail score
\[
  P_{\rm tail}(w)=1+\norm{w}^{2},
\]
with $T_{\rm tail}$ and $C_{\rm lin}$ given by the pilot $0.995$- and
$0.95$-quantiles, respectively, as in Section~\ref{sec:tail-floor-correction}.
The tail floor is said to be active at a recorded state $w$ when
\begin{equation}
  \mathcal A_{\rm tail}(w)
  :=
  \mathbf 1\!\left\{
    P_{\rm tail}(w)>T_{\rm tail},\quad
    \frac{P_{\rm tail}(w)}{C_{\rm lin}}>D_{\eta,{\rm loc}}(w)
  \right\}
  =1.
  \label{eq:experimental-tail-active}
\end{equation}
This is exactly the event on which the tail floor raises the local denominator;
the denominator construction itself is already given in
\eqref{eq:tail-final-denominator} and is not repeated here.

To measure the strength of the correction on those rare states, define
\[
  H_{\rm loc}(w)=\frac{G_\star(w)}{D_{\eta,{\rm loc}}(w)},
  \qquad
  H_{\rm final}(w)=\frac{G_\star(w)}{D_{\eta,{\rm final}}(w)}.
\]
We report the activation fraction
\begin{equation}
  p_{\rm tail}
  :=
  \Pr\!\left(\mathcal A_{\rm tail}(W)=1\right),
  \label{eq:tail-active-fraction}
\end{equation}
together with the conditional reduction
\begin{equation}
  \mathcal R_{\rm tail}
  :=
  1-
  \frac{
    \E[H_{\rm final}(W)\mid \mathcal A_{\rm tail}(W)=1]
  }{
    \E[H_{\rm loc}(W)\mid \mathcal A_{\rm tail}(W)=1]
  }.
  \label{eq:conditional-tail-reduction}
\end{equation}
The expectations in Table~\ref{tab:tail-activation-diagnostic} are empirical
averages pooled over the recorded active states from all seeds.

\begin{table}[H]
\centering
\caption{Norm-polynomial tail-floor diagnostic on the same quartic-regression target used in Experiments~1--2.}
\label{tab:tail-activation-diagnostic}
\small
\setlength{\tabcolsep}{6pt}
\begin{tabular}{lcc}
\toprule
$\eta$ & tail-active fraction $p_{\rm tail}$ & conditional reduction $\mathcal R_{\rm tail}$ \\
\midrule
$\eta=10^{-4}$ & $0.0002\pm0.0002$ & $31.4\%$ \\
$\eta=2\cdot10^{-4}$ & $0.0023\pm0.0016$ & $40.1\%$ \\
\bottomrule
\end{tabular}
\end{table}

The tail floor is active on only $0.018\%$ of recorded states at
$\eta=10^{-4}$ and about $0.23\%$ at $\eta=2\cdot10^{-4}$.  Yet on the states
where it does activate, the mean effective growth score is reduced by about
$31\%$ and $40\%$, respectively.  Thus the polynomial tail layer is genuinely
sparse but not negligible when it intervenes.  On this main
quartic-regression benchmark, the local proxy-quantile denominator determines
the dynamics on nearly all visited states, while the tail floor acts as a
selective safeguard on the small set of states where the local denominator is
insufficient.

\subsection{Experiment 5: computational cost diagnostic}
\label{sec:clock-overhead}

We close with a clock diagnostic in the \(d=50,n=1000\) quartic
stochastic-gradient setting.  The goal is to check whether the fitted proxy changes the production-stage cost regime.  We therefore report relative per-step time, normalized by the random-denominator baseline.

\begin{table}[H]
\centering
\caption{Production-stage clock cost.}
\label{tab:clock-overhead}
\small
\setlength{\tabcolsep}{5pt}
\begin{tabular}{lcc}
\toprule
Method & Per-step evaluation & rel. time \\
\midrule
random-denominator baseline. & SG oracle & $1.00$ \\
log-proxy envelope & SG oracle + fitted score & $1.09$ \\
exact-gradient ref. & full-gradient oracle & $3.90$ \\
\(G_\star\)-envelope & SG oracle + full-gradient score & $4.74$ \\
\bottomrule
\end{tabular}
\end{table}

The log-proxy envelope remains in the same production-stage cost regime as the standard stochastic-gradient denominators.  In this run it is about \(9\%\) slower than the random-denominator baseline, whereas the exact-gradient reference and the \(G_\star\)-envelope are about \(3.90\) and \(4.74\) times as
expensive, respectively.  

The one-time proxy construction cost was \(0.5445\)
seconds in total: \(0.4351\) seconds for the pilot trajectory, \(0.1076\) seconds for full-gradient score targets, and \(0.0018\) seconds for least squares and quantile calibration.  This corresponds to about \(897\) random-denominator production steps and is not incurred during production sampling.

\section{Discussion}

The proxy construction is not tied to one particular feature map.  The log-radial features used in the experiments are a low-cost choice for the reported examples, where the dominant growth is largely radial.  The analysis only uses the resulting comparison properties of the fitted score: level-set comparison, quantile transfer, and envelope-error control.  Other directional,
anisotropic, or model-specific features can be used if they provide the same inputs.

The norm-polynomial tail floor plays a limited role.  It restores the global growth control needed by the deterministic-envelope stability argument after the local denominator has been calibrated from pilot data.  It is not intended as an optimized tail-rate rule. Designing tail-rate-aware denominators would require choosing the tail score, activation level, and growth order together
with the drift growth and the finite-stepsize residual budget.

The present paper focuses on proxy construction, quantile calibration, and observable-level residual transfer for fixed deterministic envelopes.  Extending these results to full finite-step \(W_1\) or \(W_2\) convergence bounds requires additional stability and transport-dual Poisson estimates.  We therefore treat
this as a separate convergence problem rather than part of the calibration theory itself.

\appendix

\section{Reproducibility notes}
The accompanying experiment package separates the manuscript source, code, and data by experiment.  The tables in the main text are generated from the formal outputs for the proxy-sandwich diagnostics, proxy-family reference checks, the true-proxy \(d=50\) finite-chain comparison, the norm-polynomial tail-floor
diagnostic, and the clock-cost diagnostic.

Each experiment folder contains the scripts used to generate or merge the corresponding outputs, together with the reported CSV summaries.  The reported comparisons give fixed-protocol numerical evidence for the error channels studied in the paper.

\end{document}